\definecolor{editone}{HTML}{0000FF}
\newcommand{\editedinline}[1]{{\color{editone} #1}}
\theoremstyle{plain}
\theoremstyle{remark}
\newtheorem{lemma}{Lemma}
\newtheorem{proposition}{Proposition}
\newcommand{\norm}[1]{\left\lVert#1\right\rVert}
\newcommand{\var}[1]{\text{Var}\left(#1\right)}
\newcommand{\iid}{i.i.d.$\;$}
\begin{document}

\begin{frontmatter}
\title{Estimating the likelihood of arrest from police records\\in presence of unreported crimes}

\runtitle{Estimating arrests from police records with unreported crimes}

\begin{aug}
\author[A]{\fnms{Riccardo} \snm{Fogliato}}, 
\author[B]{\fnms{Arun Kumar} \snm{Kuchibhotla}},
\author[B]{\fnms{Zachary} \snm{Lipton}},
\author[B]{\fnms{Daniel} \snm{Nagin}},
\author[C]{\fnms{Alice} \snm{Xiang}},
\and
\author[B]{\fnms{Alexandra} \snm{Chouldechova}}
\address[A]{Amazon Web Services\thanks{Riccardo Fogliato worked on this project during his time at Carnegie Mellon University.}}
\address[B]{Carnegie Mellon University}
\address[C]{Sony AI}
\end{aug}

\begin{abstract}
  Many important policy decisions concerning policing
hinge on our understanding of how likely 
various criminal offenses are to result in arrests.
Since many crimes
are never reported to law enforcement, estimates 
based on police records alone must be adjusted
to account for the likelihood that each crime
would have been reported to the police.
In this paper, we present a methodological framework
for estimating the likelihood of arrest from police data 
that incorporates estimates of crime reporting rates 
computed from a victimization survey. 
We propose a parametric regression-based two-step estimator
that (i) estimates the likelihood of crime reporting using logistic regression with survey weights;
and then (ii) applies a second regression step %
to model the likelihood of arrest.
Our empirical analysis focuses on racial disparities in arrests 
for violent crimes (sex offenses, robbery, aggravated and simple assaults) 
from 2006--2015 police records 
from the National Incident Based Reporting System (NIBRS), 
with estimates of crime reporting obtained using 2003--2020 data from 
the National Crime Victimization Survey (NCVS). 
We find that, after adjusting for unreported crimes, 
the likelihood of arrest computed from police records decreases significantly. 
We also find that, while
incidents with white offenders
on average result in arrests more often than those with black offenders, 
the disparities tend to be small after accounting for crime characteristics and unreported crimes. 
\end{abstract}

\begin{keyword}
\kwd{likelihood of arrest}
\kwd{unreported crime}
\kwd{racial disparities}
\kwd{NIBRS}
\kwd{NCVS}
\end{keyword}

\end{frontmatter}

\section{Introduction}

Characterizing the likelihood that a criminal offense will result in an arrest
is central to multiple lines of criminological research, 
including crime control, deterrence, and racial disparities 
\citep{nagin2013deterrence, piquero2008assessing}. 
Analyses of arrests traditionally rely on data 
collected by law enforcement agencies. 
The offenses captured by these records, however,
represent only a fraction of all crimes that occur. %
By neglecting the ``dark figure of crime'' \citep{skogan1974validity}, 
these analyses inevitably overestimate 
the underlying arrest rate per crime committed. 
The overestimation can potentially be severe, 
as data of criminal victimization reveal 
that less than half of violent offenses in the US 
ever become known to law enforcement \citep{morgan2021criminal}. 

In order to estimate the likelihood of arrest
for all crimes that are committed,
police records can be augmented with data 
on crime reporting from victimization surveys. 
In the US, the National Crime Victimization Survey (NCVS) collects 
information on whether respondents experienced a victimization, and
whether police were made aware of the offense. 
The idea of combining victimization data with police records 
was first proposed by \citet{blumstein1979estimation}, 
who estimated arrest rates for violent offenses
in Washington D.C. in the 1970s. 
However, owing to the limited data available,
their approach could not account 
for variations in crime reporting rates
across offense characteristics. 

\begin{figure}[t]
    \includegraphics[width=\linewidth]{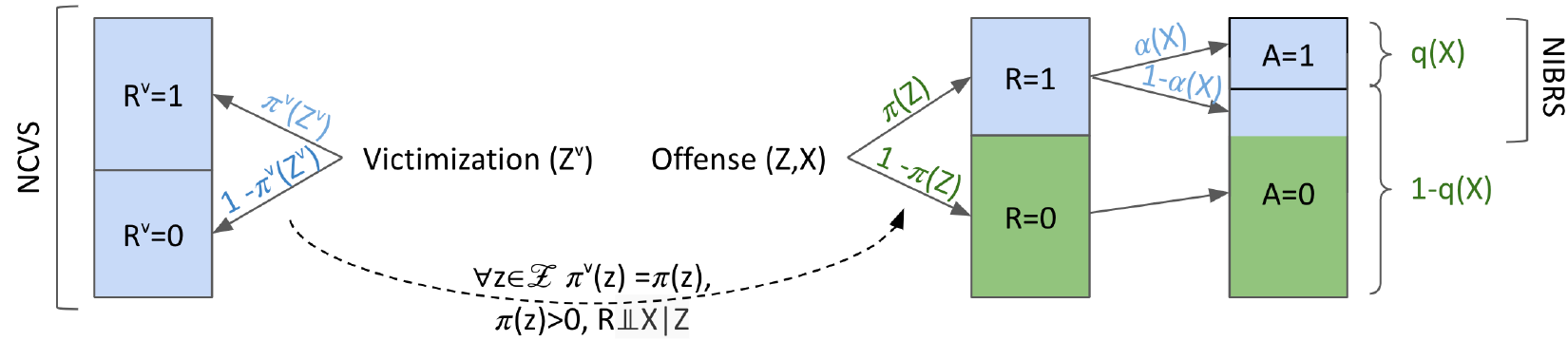}
    \caption{Summary of the proposed methodology. On the left, a victimization
    with characteristics $Z^v$ is reported to law enforcement ($R^v=1$) with
    probability $\pi^v(Z^v)$. NCVS data capture both reported and unreported
    victimizations. On the right, an offense with characteristics $(Z,X)$ is
    reported to the police ($R=1$) with some positive probability $\pi(Z)$. $X$ captures all of
    the information that is contained in $Z$, i.e., the distribution of $Z|X$ is degenerate. Only
    reported offenses ($R=1$) appear in NIBRS data. A reported crime can result in
    an arrest ($A=1$) with probability $\alpha(X)$. The target of interest in
    our work is the conditional probability $q(X)$ 
    that an offense will result in an arrest. 
    Although the estimation of $\pi$ (and consequently of $q$)
    cannot be pursued solely on police records, assuming that $\pi^v(z) =
    \pi(z)$ for all $z$'s on the support $\mathcal{Z}$ of $Z$ allows for the
    estimation of the likelihood of police notification on these data. 
    The rigorous description
    of this setup is detailed in Section \ref{sec:methods}.
    }\label{fig:methodology-plot}
\end{figure}

The increasing granularity and availability of incident-level crime data 
released by police agencies through 
the National Incident Based Reporting System (NIBRS) 
offers an opportunity to improve this analysis
by accounting for crime characteristics. 
Combined with the information collected through the NCVS, 
NIBRS records can yield more accurate estimates 
of arrest rates per crime committed.
Unfortunately, it is not possible to link records in NCVS and NIBRS directly. 
In this paper, we propose statistical methods 
to estimate the likelihood of arrest on incident-level data from police records 
while accounting for each crime's 
likelihood of police notification,
computed from victimization survey data. 
The proposed methodology consists of two simple steps 
(Figure \ref{fig:methodology-plot}).
We first estimate the likelihood of police
notification conditional on offense characteristics 
via logistic regression with survey weights on victimization data. 
Then, we derive estimators of the total number of offenses, 
the rate of police notification, and the rate of arrest
that leverage the characteristics of crimes in police records 
and the likelihood that the crime would be reported.
The likelihood of arrest conditional on crime
characteristics is modeled using logistic regression, 
and the coefficient estimates are obtained 
through a two-step estimation approach 
which accounts for the rates of crime reporting.
When fitting logistic regression on crimes with multiple offenders, 
we handle the data dependence using 
generalized estimating equations (GEEs) \citep{liang1986longitudinal}.
We show that the proposed estimators are consistent 
and asymptotically normal for the target parameters, under a series of assumptions.
Although our analytical results rely on the assumption that the models are correctly specified, the model results can be interpreted if this assumption does not hold in practice \citep{buja2019models1, buja2019models, berk2019assumption}. Although we focus on logistic regression, the proposed framework can be used to show asymptotic normality of any parametric regression model using an analysis similar to the one we conduct.

Our empirical investigation focuses on the assessment of differences in
the likelihood of arrest across racial groups on 2006--2015 NIBRS data, 
with estimates of crime reporting obtained from NCVS data. 
We focus on violent crimes (sex offenses, robbery, aggravated and simple assaults)
because the race of the offender is observed by the victim in the majority of such incidents.  By contrast, many property crimes occur without the victim present, and so there is often no opportunity for them to directly observe offender characteristics.
Our analysis reveals that on average about one in two violent offenses 
becomes known to law enforcement and 
one in five eventually results in arrest. 
Since the likelihood of crime reporting and of arrest 
vary with crime characteristics, 
arrestees do not form a representative sample of all offenders. 

In terms of racial disparities, 
we find that crimes involving black offenders
are reported at (marginally) higher rates 
than those involving white offenders, 
yet they result in arrests less often. 
Once crime characteristics are accounted for,
the estimated differences in arrests across racial groups %
tend to be small.
We further validate our results through an additional analysis 
where we employ nonparametric models instead of logistic regression 
to estimate the likelihood of police notification on survey data. 
Throughout the discussion one should keep in mind that our empirical findings rely on multiple data-related assumptions, which may not hold true (see limitations in Section \ref{sec:limitations}).

The rest of the paper is organized as follows:
Section \ref{sec:related_work} contains related work. 
Section \ref{sec:data_processing} describes 
the data sources and related data processing.
Section \ref{sec:methods} contains the methodological framework.
Section \ref{sec:emp_analysis} includes the empirical analysis 
and in Section \ref{sec:results} we present the results of our study. 
Limitations of our analysis and future work are discussed in Sections
\ref{sec:limitations} and \ref{sec:discussion} respectively.

\section{Related work}\label{sec:related_work}

The methodology developed in our work is related to the literature on
missing data problems (being a case of missingness not at random) \citep{kang2007demystifying, little2019statistical} and on capture-recapture \citep{petersen1896yearly, lee1994estimating}. 
In both lines of work, 
the target of key interest is the expected value of a random variable that is only partially observed. 
Works in this area mainly employ inverse probability weighting methods, e.g., using the Horvitz-Thompson and H\'{a}jek
estimators \citep{horvitz1952generalization, basu2011essay}. 
In this work, we use the latter estimator to compute the share of unreported crimes and arrest rates from police records. The inclusion probabilities correspond to the estimated likelihood of
police notification obtained via logistic regression. 
This approach can be thought of as a special case of the capture-recapture setting studied in \citet{huggins1989statistical} when there is only one occasion to recapture.
Similar results in the capture-recapture setting are also obtained by \citet{van2003point} and \citet{bohning2009covariate}. 
However, these works impose distributional assumptions to handle data dependence. 
Instead, following prior empirical analyses of NIBRS \citep{d2003race, fogliato2021validity}, we
assume independence of the observations. 
A popular procedure used to fit models in presence of sampling bias is the two-step approach proposed by \citet{heckman1979sample}.  This approach most commonly applies probit regression in the first stage and OLS in the second stage.  Our approach instead uses logistic regression in both steps and relies on a different set of assumptions.
The design and derivation of our
estimation procedure also draws from the literature on survey sampling
\citep{sarndal2003model} and two-step M-estimation \citep{newey1994large}. The
regression analysis of arrests for crimes involving multiple offenders (hence
with associated outcomes) via GEEs is inspired by the methods developed in the
epidemiological literature \citep{hubbard2010gee}. 
Lastly, our analysis operates
under the assumption of covariate shift, i.e., that the distribution of the
regressors but not of the outcome may vary between train and test sets
\citep{sugiyama2007covariate}. In our setting, regressors and outcomes are
represented by victimization and offense characteristics and by
whether the crime has been reported respectively, while train and test set correspond to NCVS
and NIBRS data respectively. Differently from these works, however, we assume
that the posited regression models for the likelihood of crime reporting are
well-specified and consequently no adjustments of the loss, such as by reweighting
\citep{byrd2019effect}, are required.

Our work contributes to the literature on crime control. 
Estimates of the dark figure of crime and arrest rates for violent offenses
have traditionally been obtained either from cross-sectional
data and from self-reports of offending behavior, or solely from victimization data. 
Unlike our study, these analyses generally focus on arrest rates per
individual rather than per crime committed. The approach taken by the studies in
the first line of work was pioneered by \citet{blumstein1979estimation}, and is
in spirit similar to ours. They compute arrest rates as the ratio of the
arrest rates measured on police records and of the crime reporting rates on
victimization surveys \citep{blumstein1986criminal,
blumstein1987characterizing}. 
These studies suffer from one major drawback: As we mentioned in the
Introduction, by using aggregate data from police agencies, they cannot account
for variations in the likelihood of police notification across crime types, as
\citet[page 335]{blumstein1986criminal} also noted. This is relevant to our
analysis especially because NIBRS and NCVS may capture populations of offenses
with different characteristics, as not all police agencies have adopted NIBRS
yet. Our proposed methodology addresses this issue. Within the second
line of work, \citet{blumstein2010linking} estimate arrest rates for violent
crimes on the Rand Second Inmate Survey, a survey of inmates in three US jails
conducted in the 1970s. Their estimates of arrest rates are close to ours. 
In the same study, the authors also assess arrest rates across racial groups and
find no evidence of disparities. Two analyses focused on data from the Pathways
to Desistance study, a longitudinal investigation of serious juvenile offenders
from adolescence to young adulthood \citep{piquero2008assessing,
brame2004criminal}, similarly do not find racial disparities in arrests. The
seeming contrast with our results may be explained by the different nature of
the populations of offenders we focus on, or by temporal differences.
Within the third and last line of work, \citet{buil2021measuring} estimate how
the dark figure of crime varies with crime and neighborhood characteristics 
on victimization survey data in the UK.
They conclude that this figure is associated with the socioeconomic status
of the parties involved. Although our analysis does not account for these
specific characteristics, our results similarly reveal that the likelihood of
crime reporting varies with the demographics of the victim and of the offender. 
Multiple studies have examined racial disparities in police notification and
arrests for violent offenses known to law enforcement. There is evidence that, overall, incidents with black offenders are at least as
likely as those with white offenders to be reported to law enforcement
\citep{morgan2017race, beck2018racial}. 
After accounting for contextual factors, incidents are generally more likely to be
reported when one of the parties involved is black \citep{avakame1999did,
xie2012racial, baumer2010reporting, bachman1998factors,fisher2003reporting},
although there exist both conflicting and null findings
\citep{baumer2002neighborhood, dugan2003domestic}. 
In our analysis, we find that incidents with black offenders are reported at
slightly higher rates than those with white offenders, even conditional on crime
characteristics. 

There is also mixed evidence
concerning the magnitude of differences in arrests across racial groups for crimes known to law
enforcement. While some works have concluded that crimes are more likely to result in arrest when the offender is black 
\citep{kochel2011effect,
lytle2014effects}, multiple analyses focused on violent offenses on NIBRS data have
reached a different conclusion \citep{d2003race, pope2003race,
roberts2009victim}. These works have found that, even after accounting for crime
characteristics, black offenders are less likely to be arrested than white
offenders for assault and robbery.  Differences for rape and
homicide were found to be negligible.   In our analysis, we find that accounting for unreported crimes reduces the estimated
gap in arrest rates for robbery, and the estimated gaps for assaults are close to zero.
While most studies have focused on the analysis of incidents with single
offenders and victims, \citet{lantz2019co} analyze incidents involving violent offenses where white and
black individuals offend together. They fit one single regression model and they conclude that white offenders are less
likely to be arrested than black offenders. We instead focus on all crimes with multiple
offenders, fit separate models for each crime type, and find that the likelihood of arrest is mostly similar across
racial groups of offenders. The only exception is robbery, for which arrest 
appears to be more likely for crimes involving white offenders, regardless of
whether unreported crimes are accounted for.  
Lastly, we note that the results in the literature are likely susceptible to issues stemming from model misspecification. For instance, \citet{fogliato2021validity} showed that model misspecification can impact
the magnitude and even
the direction of the estimated racial disparities. 
In this work, we reach analogous conclusions: Our model estimates vary depending on the subsets of crimes considered in the analysis.

\section{Data}\label{sec:data_processing}

Our empirical analysis leverages data from the National Crime Victimization
Survey (NCVS) and the National Incident Based Reporting System (NIBRS).
Similarly to past studies on NIBRS \citep{d2003race, fogliato2021validity}, the main
analysis in the paper centers on incidents with one victim and one offender.
The data processing to obtain the dataset of crimes involving multiple offenders
requires stronger assumptions. This is because NCVS only allows for inference at
the level of the incident, while NIBRS contains also offender-level data. We now
describe each of the two data sources and related data processing in turn. In
this step of the analysis, we wish to identify a set of incidents captured by
NCVS and NIBRS that share similar characteristics to ensure that the covariate
shift assumption underpinning our analysis plausibly holds.

\subsection{Data on criminal victimization}

The NCVS represents the primary source of information on victimization in the US
\citep{barnett2014nation}. By collecting information on the magnitude and extent
of criminal victimization from a nationally representative sample of households,
it is designed to complement data from police agencies with an alternative
measurement of crime. Survey respondents aged 12 or older are interviewed
regarding the criminal victimizations that they experience for
nonfatal personal crimes %
\citep{ncvstech16}. Our analysis focuses on data from interviews conducted
between 2003 and 2020, which we obtain from the repository of the
Inter-university Consortium for Political and Social Research (ICPSR)
\citep{ncvs9220}. The data contain information on the stratified, multi-stage cluster sampling design, namely (pseudo-)strata, primary
sampling units (PSUs), and observations (incidents) weights for serious crimes.\footnote{Similarly to past studies on NCVS \citep{xie2012racial, xie2019neighborhood}, our analysis assumes that nonresponse bias is accounted for by the use of survey weights. We acknowledge that in practice this assumption may not hold true.} Information about the sampling design and on 
the construction of the sampling weights can be found in \citet{ncvstech16}.

In the data, we consider only victimizations that satisfy the following
criteria. (i) Incidents need to include an offense of simple assault (excluding
verbal threats of assault), aggravated assault, robbery, or rape/sexual assault,
which we will refer to as ``sex offenses'' in the rest of the
paper.\footnote{The Bureau of Justice Statistics (BJS) conflates simple assault
with verbal threats of assault in their annual reports. In NIBRS, however, only
physical attacks are coded as simple assaults (c.f. page 18 in
\citet{nibrs_manual19}). In order to align the definitions of simple assaults in
NIBRS and NCVS, these crimes are excluded from the analysis. Consequently,
statistics based on our proposed taxonomy, which has been chosen to ensure the
maximal overlap of offense types between NCVS and NIBRS, will not match those in
the reports produced in the BJS reports.} (ii) We keep only incidents that have occurred
within the United States. (iii) Since the NCVS collects information (e.g.,
demographics) only about the respondent, we drop incidents involving more than
one victim. 
(iv) We consider only incidents with black or white individuals, with the
inclusion of Hispanics.\footnote{The ethnicity information for victim and
offender has been available in NCVS data since 2003 and 2012 respectively.
However, the exclusion of Hispanics from NIBRS data is rather challenging
because not all agencies report the offender ethnicity information, which was
introduced in 2013. One could potentially attempt to identify the law
enforcement agencies that generally report such information and
consider only data from those agencies, as \citet{roberts2011hispanic} have
done. That procedure, however, would introduce a geographical bias in our NIBRS
sample and thus we do not adopt such an approach.} In case of incidents
involving multiple offenders, we consider only those in which at least one of the
offenders belongs to these racial groups. Our final dataset of incidents
with single offenders consists of 11145 observations which, when reweighted
by the survey weights, correspond to about 40 million crimes. The most frequent
types of offense is simple assault (54\%, based on survey weights), followed by aggravated assault
(24\%). Robbery and sex offense are the least frequent types of crime and each
of them comprises about 10\% of the available observations. The dataset of
incidents with one or more offenders comprises 3405 additional observations and in total
it corresponds to about 50 million incidents.

The outcome of interest in NCVS is whether the police are aware of the incident,
as reported by the NCVS respondent in the survey ($R\in\{0,1\}$). 
We consider the likelihood of an incident being reported
$\pi^v(Z^v)$ to depend on a set of factors $Z^v$ which include characteristics
of the parties involved and contextual factors. In terms of demographics, we
account for the age, sex, and race of both victim and offenders. In the analysis
of multiple offenders, we consider the sex of the majority of the offenders, and
the age of the youngest and of the oldest offenders. We also consider the
relationship between victim and offenders (e.g., if they are relatives), whether
the victim suffers from a serious or minor injury, and whether the offenders
have a firearm or a different weapon. We include two variables corresponding to
whether the incident happens during the day and whether it occurs in a public
area. To account for geographical variations in the likelihood of police
notification, we account for whether the incident took place in a metropolitan
statistical area (MSA), the corresponding US Census region in which the
incident took place, and the year of the interview. Lastly, we consider whether
the offense has been only attempted, the type of crime, and, in case of sex
offenses, whether the offense consists of either rape or sexual assault. All
variables other than the victim's age and the year are categorical.

\subsection{Crime data from law enforcement agencies}

NIBRS is part of the Federal Bureau of Investigation's Uniform Crime Reporting
(UCR) data collection program. Through this program, law enforcement agencies 
submit detailed data on the characteristics of incidents that are known to them,
including information on victims and offenders, and on the nature of the
offenses. Our analysis builds on the assumption that when a crime becomes known
to law enforcement, it will be recorded in the data released by law enforcement.
Our analysis relies on 2006--2015 NIBRS data obtained from the ICPSR repository
\citep{nibrs06,nibrs07,nibrs08,nibrs09,nibrs10,nibrs11,nibrs12,nibrs13,nibrs14,nibrs15}.
Note that while we rely on NCVS data from the period 2003-2020, the NIBRS data spans a shorter time period. 

For this
analysis, we identify incidents with characteristics that are similar
to those included in our NCVS dataset. Thus, we apply the following data
restrictions. (i) We consider incidents involving crimes of rape and sexual
assault (i.e., sex offenses)\footnote{In the category of rape and sexual assault
we include crimes of forcible rape, forcible sodomy, sexual assault with an
object, and forcible fondling. We do not consider statutory rape and incest
because such offenses are unlikely to be reported by NCVS respondents in the
interviews. The definition of rape in the UCR was revised in 2013 to also
include male victims and female offenders.}, robbery, aggravated assault, and
simple assault. The majority of the incidents (about 99\% of cases)
contain only one of these offenses. In the analysis of incidents with multiple
offenders, we similarly found that in almost all of the incidents the offenders
were charged with the same offense. Thus
we can reasonably make the simplifying assumption that all
offenders involved in the same crime incident commit the same offense. 
(ii) We keep only data from the 16 states that reported most of their crime
data through the NIBRS in this time period. These states are Arkansas, Colorado, Delaware, Idaho,
Iowa, Kentucky, Michigan, Montana, New Hampshire, North Dakota, South Carolina,
South Dakota, Tennessee, Vermont, Virginia, and West Virginia. 
This exclusion makes our sample representative of a
population that is well defined, i.e., the crimes that have become known
to police and reported by agencies in the 16 states considered. 
(iii) We drop incidents that involve more than one victim and, for the analysis
of incidents with single offenders, we also drop those that involve more than
one offender. (iv) We account only for incidents where the races of victims and
offenders are either black or white, including Hispanics. We observe that, based on the data of
ethnicity that are available, about 90\% of the offenders of Hispanic origin present
in our sample are classified as whites. (v) We drop incidents that are cleared
by exceptional means due to the death of the offenders or because the offender
is in the custody of another jurisdiction.\footnote{The excluded incidents represent less than $1\%$ of all offenses, so their inclusion is unlikely to change the conclusions of our analysis. In addition, there are not large differences in clearance by exceptional means across racial
groups; see the results in Section A of the Appendix in
\citet{fogliato2021validity}.} We consider the remaining incidents that
are cleared by exceptional means, namely those for which a juvenile offender was
not taken into custody, prosecution was declined, or the victim refused to
cooperate, as having no arrest.
(vi) Lastly, to align our sample with the population of NCVS respondents, we
drop incidents with victims aged 11 or younger. Our final samples of offenses
involving only individual and one or more offenders consist of approximately 3.3
million and 4.9 million offenses respectively. As in the NCVS data, most of the
offenses are simple assault (about 70\%) and aggravated assault (about 17\%).

The outcome of interest in our analysis is whether the incident results in the
arrest of the offender (denoted $A\in\{0,1\}$). In order to estimate the likelihood that
a crime becomes known to the police for each incident in this dataset, $\pi(Z)$,
we process the features in the data to obtain a set of crimes characteristics
$Z$ that is analogous to those captured in our final NCVS dataset ($Z^v$). NIBRS
also contains additional information that can be used in estimating the likelihood of arrest. In our application, we estimate the likelihood of
the crime resulting in the offender's arrest, $q(X)$, based on crimes
characteristics, $X$. In our analysis of incidents with individual offenders, $X$
includes not only all the variables that are present in $Z$, but also
information about the state where the crime occurred, the size of the police
force in the agency, and the number of police officers per capita. This additional police agency data is obtained from police employee datasets downloaded from
 ICPSR \citep{leoka06, leoka07, leoka08, leoka09, leoka10, leoka11, leoka12,
leoka13, leoka14, leoka15}. In the analysis of incidents with multiple
offenders, $Z$ captures aggregate information about the incident, e.g., the age
of the youngest offender. $X$ contains variables measured both at the
level of the incident and of the individual offender, e.g., the age of the
individual offender.

\section{Methods}\label{sec:methods}

\begin{algorithm}[ht]
\caption{Estimation strategy on NCVS and NIBRS}\label{alg:est_strategy}
\begin{algorithmic}
\State $\hat\gamma \gets \text{solve} \sum_{i=1}^{N^v} w_i I_i h^v(R_i^v, Z_i^v;\gamma) =0$ \Comment{Likelihood of police notification $\pi^v(Z^v;\gamma)$ on NCVS data}
\State $\hat
N\gets \sum_{i=1}^N R_i/\pi(Z_i;\hat\gamma)$ \Comment{Total number of offenses $N$ on NIBRS data}
\State $\hat{\pi}^* \gets \sum_{i=1}^N R_i/\hat N$, $\hat{q}^*\gets \sum_{i=1}^N A_i/\hat N$ \Comment{Rate of police notification $\pi^*$ and arrest $q^*$ on NIBRS data}
\State $\hat\theta \gets \text{solve } \sum_{i=1}^N R_i h(A_i, Z_i, X_i;\theta,\hat\gamma) =  0$ \Comment{Likelihood of arrest $q(X;\theta)$ on NIBRS data}
\end{algorithmic}
\end{algorithm}

In this section, we present the statistical methodology behind our empirical
analysis. 
Algorithm \ref{alg:est_strategy} describes the proposed estimation approach. 
We first estimate $\pi^v(Z^v;\gamma)$, the likelihood of police
notification conditional on crime characteristics via survey-weighted logistic regression. 
We then introduce the offense data setup and
describe the assumptions underpinning our inference strategy. 
Next, we review
estimators of NIBRS summary statistics, namely the total number of offenses $N$, the rate of police notification $\pi^*$, and the arrest rate for all crimes committed $q^*$.   
Lastly, we estimate $q(X;\theta)$, the
likelihood of arrest conditional on crime characteristics via logistic
regression on NIBRS data. 
Under regularity conditions, all of the estimators that we present are asymptotically normal. 
Detailed derivations and proofs of the results are deferred to the Appendix. 

\subsection{Crime reporting on NCVS}\label{sec:est_ncvs}
Consider the finite population of criminal victimizations in the US,
denoted by $V^{N^v}=\{(Z_i^v, R_i^v)\}_{i=1}^{N^v}$. This can be viewed as an
\iid sample $(Z^v, R^v)\sim P^v$, where
$Z^v=(Z(1),\dots,Z(d_z))\in\mathcal{Z}$ indicates the victimization's
characteristics and $R^v\in\{0,1\}$ is the indicator of whether the victimization becomes known to law
enforcement.
We have
access to the NCVS survey sample of size $n^v$, which is drawn from $V^{N^v}$
under some probability sampling design $\psi$. Let the random variable $I_i=1$
if $i^{th}$ observation is included in this sample, and $I_i=0$ otherwise. The
sample has an associated set of sampling weights $\{w_i:1\leq i\leq N^v, I_i = 1\}$,
which are commonly thought as representing the number of units that each sampled
observation represents in the larger finite population \citep{lohr2007}.

We model the conditional probability of police notification, $\pi^v(Z^v)$, 
via logistic regression. That is, we take
$\pi^v(z;\gamma):=1/(1+e^{-\gamma^T z})$ to describe $\mbbP_{P^v}(R^v=1|Z=z)$, for
$\gamma\in \Gamma$ for some compact set $\Gamma\subset \mathbb{R}^{d_z}$. 
The superpopulation target parameter $\gamma_0\in \text{Int}(\Gamma)$ is defined by the moment condition $\mbbE_{P^v}[h^v(R^v, Z^v;\gamma)] = 0$ where $h^v(R^v, Z^v;\gamma) := (R^v-\pi^v(Z^v;\gamma))Z^v$. 
The design-based estimator $\hat\gamma$ of $\gamma_0$ is the solution to the estimating equation $\sum_{i=1}^{N^v} w_i I_i h^v(R_i^v, Z_i^v;\gamma) =0$ \citep{lumley2017fitting}. 
Under certain regularity conditions, $(\Sigma^v)^{-1/2}\sqrt{n^v}(\hat\gamma -
    \gamma_0)\overset{d}{\rightarrow}\mathcal{N}(0,I_{d_z})$ where $\Sigma^v$ is a positive definite matrix.

\subsection{NIBRS setup}
Let $O^N=\{(X_i, Z_i, R_i, A_i)\}_{i=1}^N$ denote the sample of all offenses
committed, which is assumed to be an \iid sample of $(X,Z,R,A)\sim P$. In this part of the
analysis, we consider crimes with only one offender, so the independence
assumption is likely to hold (but see the longer discussion in Section
\ref{sec:limitations}). Let
$X=(X(1),\dots,X(d_x))\in\mathcal{X}$ and $Z=(Z(1),\dots,Z(d_z))\in\mathcal{Z}$
indicate incident characteristics. Let $R\in\{0,1\}$ and $A\in\{0,1\}$ indicate
whether the offense is known to the police ($R=1$) and whether it results in an
arrest ($A=1$) respectively. Given that an offense can result in an arrest only
if it is known to the police, we assume that $R=0$ implies $A=0$. Note that
police-recorded data contain only offenses that have been reported, i.e., those
for which $R=1$. We denote with $\mbbE$ the expectation over $P$. 

In order to estimate parameters of interest on the entire population using
solely the observations for which $R=1$, we will make use of the following set
of assumptions. 
\begin{enumerate}[label=\textbf{A.\arabic*},ref=A.\arabic*]
    \item \label{ass_correctpi} $\forall z \in \mathcal{Z}$,
    $\pi^v(z;\gamma_0)=\mbbP_{P^v}(R^v=1|Z^v=z)$ for $\gamma_0\in\Gamma$ where $\Gamma$
    is a compact set.
    \item \label{ass_pioverz} $\forall (x,z)\in\mathcal{X}\times \mathcal{Z}$, $\mbbP(R=1|X=x,Z=z)=\mbbP(R=1|Z=z)$.%
    \item \label{ass_matchingpi}  $\forall z \in \mathcal{Z}$, $\mbbP(R=1|Z=z)=\mbbP_{P^v}(R^v=1|Z^v=z)$.
    \item \label{ass_boundedcov} $\norm{X}_\infty< M$ and $\norm{Z}_\infty< M$
    for some $M>0$.
\end{enumerate}
\ref{ass_correctpi} states that the parametric model $\pi^v(z;\gamma_0)$ is correctly specified for $\mbbP_{P^v}(R^v=1|Z^v=z)$. We empirically assess
this assumption by comparing the logistic regression model with a nonparametric
approach, and find small differences in the estimates produced by the two
methods for three of the four offense types considered. \ref{ass_pioverz} states that $R$ is independent of $X$ after conditioning on $Z$. In our empirical analysis, we study the likelihood of arrest per crime
committed, $q$, as a function of only $X$ (see \ref{sec:methods_q}) because $X$
contains at least as much information as $Z$. In other words, $X$ includes more
refined details about the incident such as specific geographical information and
characteristics about each of the offenders within an incident (hence the
distribution of $Z|X$ is degenerate). These characteristics are
available in NIBRS but not in NCVS. However, through \ref{ass_pioverz} we assume
that this additional information is not relevant to the estimation of the
distribution of $R|Z$. \ref{ass_pioverz} may be violated if $Z$ did not
capture, for instance, variations in reporting rates across police agencies, but
$X$ did.
\ref{ass_matchingpi} allows us to compare the probability of police notification
in NIBRS and NCVS. This assumption casts our learning problem into the covariate
shift setting. Together, \ref{ass_correctpi} and \ref{ass_matchingpi} imply that
$\pi^v(z;\gamma_0)=\mbbP(R=1|Z=z)$.  Thus, in what follows we drop ``$v$''
from the superscript of $\pi(Z;\gamma_0)$. \ref{ass_boundedcov} ensures that
functions of these random variables will have finite moments. This assumption
clearly holds true in our application. Lastly, \ref{ass_correctpi} and
\ref{ass_boundedcov} imply that $\pi(z;\gamma)> (1 +
e^{\sqrt{d}M\sup_{\gamma\in\Gamma}\norm{\gamma}})^{-1}>0$ for all
$z\in\mathcal{Z}$ and $\gamma\in\Gamma$, an assumption that is known as \emph{positivity} in the
causal inference literature \citep{little2019statistical}. In our setting, it
rules out the possibility that there exist offenses with certain characteristics
that will never be reported to the police. Under conditions \ref{ass_correctpi}--\ref{ass_boundedcov}, we can establish
the consistency of the estimators we propose in the next section. By imposing additional assumptions on the rate of growth of $n^v$, $N^v$, and $N$, we can derive their asymptotic distributions as well.\footnote{Note that no assumption on $n:=\mbbE[R]$ is needed here because, unlike the
survey sampling setting, the \iid assumption on the data of offenses guarantees
that $n$ and $N$ will grow at the same rate. Since the sample sizes of the NCVS
and NIBRS samples are of comparable magnitude, we assume that $\lim_{n, n^v\rightarrow\infty} n/n^v=\kappa = O(1)$. The
results readily generalize to the cases where $n\ll n^v$ or $n \gg n^v$.}

\subsection{Crime reporting and arrest rates on
NIBRS}\label{sec:methods_summary_stats} 
There are three targets of key interest on NIBRS. First, the total number of offenses, which can be estimated using the Horvitz-Thompson estimator $\hat
N:=\sum_{i=1}^N R_i/\pi(Z_i;\hat\gamma)$. 
Second, the expected rate of
police notification $\pi^*:=\mbbE[R]$. 
Third, the arrest rate $q^*:=\mbbE[A]$. 
These rates are estimated by $\hat{\pi}^*:=\sum_{i=1}^N R_i/\hat N$ and $\hat{q}^*:=\sum_{i=1}^N A_i/\hat N$ respectively. 
Under the assumption that $\hat\gamma$ is consistent for
$\gamma_0$ and asymptotically normal, as well as some regularity conditions, these estimators are asymptotically normal. The
critical step in the derivation of the limiting distributions is to leverage the fact that $\mbbE[R/\pi(Z;\gamma_0)]=1$ in order to rewrite
the unconditional expectations with respect to the event $\{R=1\}$. This make
possible the estimation based only on the sample we have access to.

\subsection{Conditional probability of arrest on NIBRS via logistic regression}\label{sec:methods_q} We
model the probability of arrest conditional on the covariates $\mbbE[A|X]$ using
logistic regression; i.e., we consider $q(x;\theta) := 1/(1+e^{-\theta^T x})$ where
$\theta\in\Theta$ for a compact set $\Theta\subset \mathbb{R}^{d_x}$.
The parameter $\theta_0\in \text{Int}(\Theta)$ is
defined by the following moment condition
\begin{gather}\label{eq:q_reg}
    \mbbE\left[ (A-q(X;\theta))X \right] = 0.
\end{gather}
Since $A=0$ whenever $R=0$, it follows that $\mbbE[AX]=\mbbE[RAX]$. Then, under Assumptions \ref{ass_correctpi}--\ref{ass_matchingpi},
the moment condition \eqref{eq:q_reg} can be rewritten as
\begin{equation*}
    G(\theta,\gamma_0) := \mbbE\left[\left(A-\frac{q(X;\theta)}{\pi(Z;\gamma_0)}\right)X R\right] = 0. 
\end{equation*}
Thus, in practice, we
compute the estimator $\hat\theta$ of $\theta_0$ by solving the following
estimating equation
\begin{gather}\label{eq:theta_solve}
    \hat{G}_\mathcal{N}(\theta, \hat\gamma) := \frac{1}{N} \sum_{i=1}^N R_i h(A_i, Z_i, X_i;\theta,\hat\gamma) =  0,
\end{gather}
where $h(A_i,
Z_i, X_i;\theta,\gamma) := [A_i-q(X_i;\theta)/\pi(Z_i;\gamma)]X_i$. 
The estimate $\hat\theta$ of $\theta_0$ can be found using
iteratively (re-)weighted least squares. 
Under Assumptions \ref{ass_correctpi}--\ref{ass_boundedcov}, together with the consistency and asymptotic
normality of $\hat\gamma$ as an estimator of $\gamma_0$, then   
$\Sigma^{-1/2} \sqrt{n}(\hat\theta - \theta_0)\overset{d}{\rightarrow}\mathcal{N}(0,I_{d_x})$
as $n\rightarrow\infty$ where $\Sigma$ is positive definite. %

\subsection{Conditional probability of arrest on NIBRS via GEEs}\label{sec:methods_q_gee}

In the previous sections we considered only crimes with a single offender. We now consider estimating the conditional probability of arrest from police records on crimes where one \textit{or more} offenders are
involved. When multiple offenders act together, the \iid assumption on $O^N$ clearly doesn't hold. We denote the sample of crime incidents, where each incident represents a cluster containing observations corresponding to the individual offenders, occurring in the US with
$O^N=\{(K_i, \bold{X}_i, Z_i, R_i, \bold{A}_i)\}_{i=1}^N$, which is an \iid
sample of $(K, \bold{X}, Z, R, \bold{A})\sim P$ with $K\in\mathbb{Z}_{+}$. Here
$\bold{X}_i$ is a matrix of dimension $K_i\times d_x$ whose $k^{th}$ column
corresponds to $\bold{X}_{ik}$, the characteristics relative to the offense
committed by the $k^{th}$ offender in the $i^{th}$ incident. The vector
$\bold{A}_i=(\bold{A}_{i1}, \dots, \bold{A}_{iK_i})^T$ indicates whether each of
the offenders in the $i^{th}$ incident are arrested ($\bold{A}_{ij}=1$ for
$1\leq j\leq K_i$) or not. $R_i$ indicates whether the $i^{th}$ incident is
known to the police, and $Z_i$ represents characteristics of the same incident.
Note that $Z_i$ contains information that is shared across all offenders within
the same incident (e.g., location), while $\bold{X}_i$ may also include
covariates that are specific to the individual offender (e.g., demographics of
that offender). %

Despite the likely positive correlation across outcomes within the same incident,
the logistic regression coefficient estimates discussed in the
previous section remain consistent for the target parameters.  However, the
estimates of their asymptotic variance need to be adjusted
\citep{fitzmaurice1993regression}.  
To account for the correlation in the variance estimation and to increase
efficiency, we employ generalized estimating equations (GEEs)
\citep{liang1986longitudinal}. We
assume that 
\begin{enumerate}[label=\textbf{A.\arabic*},ref=A.\arabic*]
    \setcounter{enumi}{4}
    \item \label{ass:gee_mean_correct} $\mbbE[\bold{A}_{ij}|\bold{X}_{i}] =
q(\bold{X}_{ij};\theta_0)$ where $q(\bold{X}_{ij};\theta_0):=(1 + e^{-\theta_0^T
\bold{X}_{ij}})^{-1}$ for $1\leq i \leq N$, $1 \leq j\leq K_i$,
$\theta_0\in\text{Int}(\Theta)$.
\end{enumerate}
According to this assumption, the probability of arrest for an individual does not depend on incident's characteristics related to their co-offenders; see \citet[Section 3.2]{fitzmaurice2008longitudinal} for a longer discussion of this assumption. To model the covariance across outcomes, we define the matrix
$W_i(\theta, \alpha) := W(\bold{X}_i, Z_i;\theta, \alpha)=
D_i(\theta)^{1/2} C_i (\alpha) D_i(\theta)^{1/2}$ for $1\leq
i\leq N$, where $D_i(\theta)$ is a diagonal matrix of dimension
$K_i\times K_i$ whose $k^{th}$ diagonal entry corresponds to
$q(\bold{X}_{ik};\theta)(1-q(\bold{X}_{ik};\theta))$. $C_i(\alpha)$ is the
so-called  ``exchangeable working correlation'' matrix, which has dimension
$K_i\times K_i$ with $1$ on the diagonal and any $\alpha\in[-1,1]$ elsewhere
\citep{liang1986longitudinal}.

The estimator $\hat \theta\in\text{Int}(\Theta)$ solves the following
generalized estimating equation
\begin{equation*}
    \begin{split}
       \frac{1}{N} \sum_{i=1}^{N} R_i h_{GEE}(\bold{X}_i, Z_i, \bold{A}_i;\theta, \hat\alpha) :=  \frac{1}{N}\sum_{i=1}^{N} R_i \bold{X}_i D_i(\theta) W_i(\theta, \hat\alpha)^{-1} 
    \left(\bold{A}_i - \frac{\bold{q}_i(\theta)}{\pi(Z_i;\hat \gamma)} \right) = 0,
    \end{split}
\end{equation*}
where $\bold{q}_i(\theta) = (q(\bold{X}_{i1};\theta), \dots,
q(\bold{X}_{iK_i};\theta))^T$ and $\hat\alpha$ is a consistent estimator of
$\alpha_0$, the true correlation parameter, given $\theta$.
Then, under certain certain conditions, $\Sigma_{GEE}^{-1/2}\sqrt{n}(\hat\theta -
\theta_0)\overset{d}{\rightarrow}\mathcal{N}(0,I_{d_x})$ as $N\rightarrow\infty$ where 
$\Sigma_{GEE}$ is a positive definite matrix.

\begin{figure}
\begin{center}
\includegraphics[width=0.95\textwidth]{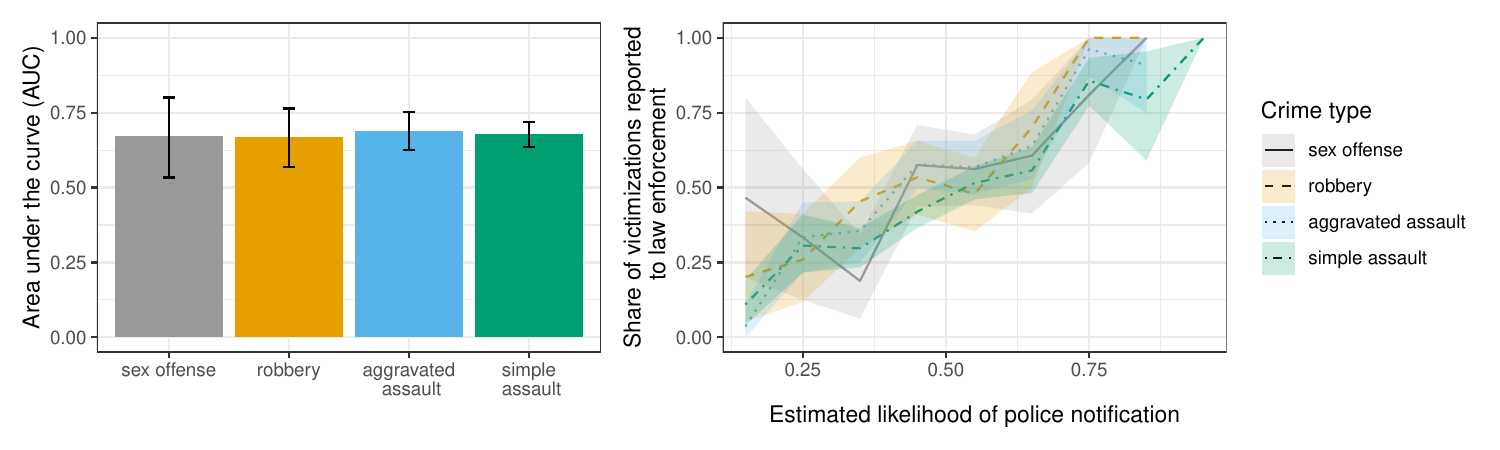}
\end{center}
\caption{Area under the curve (AUC) and calibration for the logistic regression
model with survey weights that estimates the
likelihood of police notification $\pi^v(Z^v)$ on 2003--2020 NCVS data, obtained via cross-validation for the
four types of crimes considered. Error bars and shaded regions indicate 95\% confidence intervals for the mean.}
\label{fig:logistic_regression_calandauc_ncvs}
\end{figure}

\section{Empirical strategy}\label{sec:emp_analysis}

Our empirical analysis leverages the analytical framework presented in Section
\ref{sec:methods}. The code used for data processing and the analysis is available at \href{https://github.com/ricfog/arrests-with-unreported-crimes}{github.com/ricfog/arrests-with-unreported-crimes}.

\subsection{Missing data} Both NCVS and NIBRS data contain a small share of
missing values in certain crime characteristics. In NCVS data, a handful of the
other variables are missing in less than 5\% of the cases. The only exceptions are the MSA information and offender's age in crimes of robbery, which are missing in about one fourth of the observations. In NIBRS data, a few
of the variables are missing for less than 5\% of the observations. Past work on
NCVS and NIBRS has assumed the data to be either missing completely at random
(MCAR) or missing at random (MAR) \citep{d2003race, xie2017effects,
fogliato2021validity}. We assume the data to be MAR and impute the missing
values via multiple imputation by chained equations (MICE) using predictive mean
matching for numerical data, and multinomial and logistic regression for categorical
data \citep{azur2011multiple}. We repeat the same procedure for both crimes
involving individual and multiple offenders.  It is common practice to create
multiple imputed datasets and then pool the estimates computed from each
\citep{graham2007many}. However, our preliminary analysis showed that multiple imputed datasets yielded similar estimates, likely because only a small proportion of observations have missing values. Due to the
considerable computational costs of our workflow, we decided to use only
one imputed dataset. Consequently, in the downstream inference we will not
account for the uncertainty arising from the imputation procedure, 
which we anticipate to be negligible.

\subsection{Estimation on NCVS} 
We now detail the key steps for the analysis of incidents with one offender. An
analogous procedure is carried out for incidents with multiple offenders with
similar results, so we omit the details for brevity. 

We begin by describing the process for using the 2003--2020 NCVS victimization data to estimate the likelihood that an offense becomes known
to law enforcement $\pi$ conditionally on its characteristics. We first split the data into two groups
stratifying by outcome and year. One subset, which comprises one fifth of the
data, is used for model selection. The validation subset, which consists of the rest
of the data, is used to estimate the chosen model that will be employed for
inference on NIBRS. In terms of model selection, we evaluate several regression
models that include interactions between the crime types and the other
regressors, which reflect different modeling choices made by past studies
\citep{xie2012racial, xie2019neighborhood, baumer2002neighborhood}.  
This step is
warranted by the diversity of feature sets and model choices adopted  
in regression analyses by these works. We assess model performance via model calibration and
area under the curve (AUC) separately for each
type of offense using cross-validation. Both evaluations account for survey
weights. The predictions appear to be well calibrated across the various models for
all crime types considered other than sex offenses. The AUCs produced by the predictions of the various models are similar, and all are in the range 0.55--0.7 across
offense types and models. We proceed with the logistic regression model without interactions between features and crime types, a model that was also considered in past work on NCVS similar to ours \citep{xie2012racial}.\footnote{We have conducted an additional analysis employing the same model with an interaction between offender's and victim's races. The results from this analysis and those that we report in the paper are similar.}

Next, we fit the selected model on the validation set and
compute the variance of the coefficient estimates using the information on
pseudo-strata and pseudo-primary sampling unit (PSU) information contained in
the data. %
We report the cross-validated AUC and calibration of this model on this second
set of observations in Figure \ref{fig:logistic_regression_calandauc_ncvs}.
Despite the wide confidence intervals, the model predictions are well
calibrated and the AUCs are above 0.6 across all four types of offenses. 

Simultaneously, we conduct a sensitivity analysis to compare the estimates produced by
the logistic regression with those obtained from a SuperLearner, which
represents a more flexible approach \citep{van2007super, polley2010super}. For this purpose, we
train a logistic regression model, a logistic Lasso model \citep{tibshirani1996regression},
a multilayer perceptron with one hidden layer, a Naive Bayes classifier, and a
random forest model \citep{breiman2001random}. We tune
the hyperparameters of each of these models separately via cross-validation. To account for the survey weights 
in training, we resample
the observations within each data fold selecting an observation with probability
proportional to the corresponding survey weight. We select the set of hyperparameters that
yield the highest average AUC across the four crime types. On the second subset
of the data, we select by cross-validation the weights that correspond to the
convex combination of the predictions produced by these models achieving the
best predictive performance. As we describe in Section \ref{sec:results}, the estimates of the likelihood of police notification generated by
this nonparametric model are close to those obtained through the logistic
regression approach.

\subsection{Estimation on NIBRS}
In the next stage, we use the results from the weighted logistic regression analysis on NCVS data. These results help us calculate the likelihood that the police will be notified about each individual incident in the NIBRS dataset.
Particularly low values of the assessed probabilities would represent a
potential violation of the positivity assumption, which would skew our
estimates. Accordingly, we examine the predictions generated by the two models
across crime types. The smallest detected values range between $0.05$ for sex
offenses to above $0.1$ for the other offense types. Thus, the results of our
analysis on NIBRS will not be overly influenced by a few outliers.   
We first estimate the
total number of crimes $N$, the rate of police notification $\pi^*$, and the
arrest rate $q^*$, along with their corresponding variances. Then, we follow the procedure described in
Section \ref{sec:methods_q} to estimate the likelihood
of arrest conditional on covariates, $q(X;\theta_0)$, for incidents with individual
offenders via logistic regression.

We additionally perform a number of robustness checks to assess the sensitivity of the downstream estimates to the modeling assumptions.  First, we repeat these analyses using estimates of the likelihood of police notification obtained using the SuperLearner in place of the weighted logistic regression.
We also investigate whether the logistic regression for $q(X;\theta_0)$ may be
misspecified (provided that the model for $\pi(Z;\gamma_0)$ is correct), using  ``focal slope'' model diagnostics proposed by \citet{buja2019models}.
Using these graphical tools, we analyze how the coefficient estimates (specifically, offender's race) change when fitting the regression model on various configurations of the regressor distributions. To implement the reweighting procedure, we proceed as follows. We
first construct a grid of five evenly spaced values for the numeric features,
and use the grid values of $\{0,1\}$ for the binary features. For each regressor
separately, we split the observations into groups based on the grid's cell
center that is closest to each observation's feature value in absolute distance.
For each feature-grid cell pair, we then obtain 50 estimates of the logistic
regression coefficients by bootstrap resampling 10,000 observations from the
given group. We conclude the discussion by presenting our final analysis of
incidents with one or more offenders which employs the GEEs framework described
in Section \ref{sec:methods_q_gee}.

\section{Results}\label{sec:results}
This section is organized as follows. 
We first empirically demonstrate how our approach, by virtue of accounting for covariate shift, strictly improves upon prior analyses such as \citet{blumstein1979estimation}. 
Then, we assess racial disparities 
in the rates of police notification, %
and disparities in arrest rates on all crimes and on only those known to law enforcement. %
This first part of the analysis focuses 
on crime incidents with individual offenders. 
Next, we present the regression results 
for the estimation of the likelihood of arrest 
conditional on crime characteristics via GEEs. 
When describing results as statistically significant, we apply a significance level of 0.01.

\subsection{The necessity of accounting for covariate shift}
In their approach, \citet{blumstein1979estimation} assume
that all incidents in NIBRS are equally likely to be reported to law enforcement once we condition on the offender's race and the crime type. 
Their method naively estimates the number of actual crimes
underlying the reported crimes in NIBRS
by applying the (fixed) ratio 
of actual crimes to reported crimes in NCVS.
However, because reporting rates may vary according to crime characteristics,
and because of potential covariate shift between NCVS and NIBRS
in the distribution of crime characteristics, 
these estimates may be significantly biased.

Both phenomena are observed in the data. As previously discussed, covariate shift arises 
in part due to the different geographical coverage of NIBRS and NCVS. 
For example, 35\% of the offenses of simple assault 
known to law enforcement in 2006--2015 NCVS data 
occur in the southern regions of the US, 
compared to 60\% of those in the NIBRS. 
As another example, in about 60\% of the offenses 
of aggravated assault recorded in NCVS 
the offender is known to the victim. 
By contrast, this occurs in 85\% of the cases in NIBRS data. 
The coefficient estimates produced 
by the logistic regression fitted on NCVS data  
reveal that the likelihood of reporting varies
across most of the crime characteristics considered by our analysis,
often quite substantially (see the Appendix). 
Unlike \citet{blumstein1979estimation}, 
we account for these variations in our analysis.

\subsection{Racial disparities in crime reporting} %

In the available NIBRS data, 59\% of all offenders are white. We estimate that the NIBRS data capture 
44\% (standard error=5\%) and 48\% (5\%) of all violent offenses 
committed by white and black offenders, respectively. 
Equivalently, slightly more than half of the crimes 
that occur in the jurisdictions covered by NIBRS are 
not reported to law enforcement for both racial groups. 
The lower reporting rates for white offenders relative to black offenders indicates a (not statistically significant) marginal
overrepresentation of black offenders in
the data recorded by police agencies compared to their representation in the
larger population of offenders. 
More specifically, we estimate that 61\% of all
crimes that occur are committed by white offenders. 

Table
\ref{tab:summary_stats_nibrs} shows the breakdown of the rates of police
notification by offense types and offenders' racial groups. Sex offenses are the least likely to be reported to police, with only one in four incidents being reported compared to one
in two for the other crime types. Since these crimes are unlikely to be reported
to law enforcement, the estimates suffer from large variance. We find
that crimes with black offenders are associated with higher rates of reporting
than those with white offenders across all offense types. However, the
regression model fitted on NCVS data reveals that there is only a weak and not
statistically significant association between reporting and the offender's
racial group once other crime characteristics are taken into account. 

It is possible that the logistic regression model fitted on NCVS data is
misspecified. Thus, we conduct an analysis of the reporting rates where the
likelihood of police notification is estimated via the SuperLearner.
By comparing the estimates produced by the two models on NIBRS data, we find that the estimates are close for the crimes of robbery
and assaults.
The estimates of arrest rates obtained using the two models
are virtually identical for robbery and assault, even when we condition on the
offender's racial group (the SuperLearner estimates are within a 2\% difference from
those in Table~\ref{tab:summary_stats_nibrs}). 
For sex offenses, on the other hand, the logistic
regression tends to underestimate the likelihood of reporting compared to the
SuperLearner. This underestimation is quite substantial. 
The resulting rates of police notification based on the SuperLearner are larger than those produced by the
logistic regression. %

\subsection{Racial disparities in arrest rates}

\begin{table}
  \centering
\begin{threeparttable}[b]
  \caption{{\normalfont Summary Statistics (2006--2015 NIBRS Data): Estimating Unreported Crimes Using the Likelihood of Police Notification Computed from NCVS Data.} 
  }\label{tab:summary_stats_nibrs}
  \begin{tabular}{lcccc}
    \toprule
  Variable & sex offense & robbery & aggravated assault & simple assault \\ 
    \% police notification &  &  &  &  \\ 
    \textbullet\; black offenders & 21\% (19\%) & 55\% (6\%) & 63\% (4\%) & 47\% (5\%) \\ 
  \textbullet\; white offenders & 19\% (17\%) & 51\% (7\%) & 60\% (4\%) & 45\% (4\%) \\ 
    \midrule 
    \% arrests (reported crimes) &  &  &  &  \\ 
     \textbullet\; black offenders & 22\% ($<$1\%) & 17\% ($<$1\%) & 40\%  ($<$1\%) & 38\%  ($<$1\%) \\ 
  \textbullet\; white offenders & 24\%  ($<$1\%)  & 33\%  ($<$1\%)  & 56\%  ($<$1\%) & 50\%  ($<$1\%) \\ 
    \midrule
    \% arrests (all crimes) &  &  &  &  \\ 
\textbullet\; black offenders & 5\% (20\%) & 9\% (3\%) & 25\% (3\%) & 18\% (4\%) \\ 
  \textbullet\; white offenders & 5\% (19\%) & 17\% (5\%) & 34\% (4\%) & 22\% (4\%) \\ 
     \bottomrule
  \end{tabular}
\begin{tablenotes}[flushleft]
\item Notes: Standard errors are reported within
  parentheses. The summary statistics are computed on incidents with
  only one offender. 
The likelihood of police notification is estimated
on NCVS data via logistic regression with survey weights using the
methodology described in Section \ref{sec:emp_analysis}. 
\end{tablenotes}
\end{threeparttable}
  \end{table}

We now turn to the estimation of arrest rates. Overall, 49\% (standard
error$<$1\%) of the offenses known to law enforcement involving white offenders
resulted in arrest, compared to 37\% ($<$1\%) of those involving black offenders.
Table \ref{tab:summary_stats_nibrs} reveals that arrest rates are similar across
racial groups for sex offenses, while robbery and assault incidents
white offenders result in arrest considerably more often than those with black
offenders. Past works on NIBRS have reached qualitatively similar conclusions \citep{d2003race,
lantz2019co}. Despite the lower crime reporting rates, crimes with white
offenders remain more likely to result in arrests than those with black
offenders once unreported crimes are accounted for. Overall, arrest rates for crimes
are 21\% (7\%) for white offenders  and 17\% (5\%) for black offenders. 
Table \ref{tab:summary_stats_nibrs} shows that arrest rates are higher for
white offenders in case of assaults and robbery, and are comparable across racial groups in case sex
offenses. As in the observed police data, these rates greatly vary across
offense types: Arrests occur in about one in twenty sex offenses and one in five simple assaults. 
The sensitivity analysis via the SuperLearner produces quantitatively similar results except for sex offenses.

\begin{table}[t]
    \centering
    \caption{Regression Results (2006--2015 NIBRS Data): Assessing Racial Differences in Arrest Likelihood for Single-Offender Incidents, Using 2003--2020 NCVS Data for Police Notification Estimates.}\label{tab:regression_q} \resizebox{\textwidth}{!}{%
    \begin{tabular}{lllll}
      \toprule
      \multicolumn{1}{l}{\textbf{Variable}} & 
    \multicolumn{1}{c}{\textbf{Sex offense}} & 
      \multicolumn{1}{c}{\textbf{Robbery}} & 
    \multicolumn{1}{c}{\textbf{Aggravated assault}} & 
      \multicolumn{1}{c}{\textbf{Simple assault}} \\
      \midrule
Intercept & \phantom{-}0.05 (0.01)*** & \phantom{-}0.48 (0.08)*** & \phantom{-}0.46 (0.06)*** & \phantom{-}0.20 (0.02)*** \\ 
  Age of offender & \phantom{-}1.01 (0.00)*** & \phantom{-}1.01 (0.00)*** & \phantom{-}1.01 (0.00)*** & \phantom{-}1.00 (0.00)    \\ 
  Off. is male & \phantom{-}0.99 (0.08)    & \phantom{-}0.87 (0.05)*   & \phantom{-}0.89 (0.04)*   & \phantom{-}0.90 (0.05).   \\ 
  Off. is white & \phantom{-}1.00 (0.07)    & \phantom{-}1.23 (0.06)*** & \phantom{-}1.03 (0.05)    & \phantom{-}1.01 (0.06)    \\ 
  Age of victim & \phantom{-}1.00 (0.00)    & \phantom{-}1.01 (0.00)*** & \phantom{-}1.01 (0.00)*** & \phantom{-}1.01 (0.00)*** \\ 
  Victim is male & \phantom{-}0.84 (0.06)**  & \phantom{-}0.87 (0.04)**  & \phantom{-}0.96 (0.04)    & \phantom{-}0.93 (0.05)    \\ 
  Victim is white & \phantom{-}0.83 (0.07)*   & \phantom{-}1.00 (0.06)    & \phantom{-}1.05 (0.06)    & \phantom{-}1.02 (0.07)    \\ 
  Off. is acquaintance & \phantom{-}0.87 (0.06)*   & \phantom{-}1.34 (0.07)*** & \phantom{-}0.98 (0.05)    & \phantom{-}0.67 (0.04)*** \\ 
  Off. is family member & \phantom{-}1.20 (0.14)    & \phantom{-}2.12 (0.19)*** & \phantom{-}1.54 (0.12)*** & \phantom{-}1.24 (0.12)*   \\ 
  Off. is intimate partner & \phantom{-}1.33 (0.13)**  & \phantom{-}2.30 (0.18)*** & \phantom{-}1.92 (0.13)*** & \phantom{-}1.58 (0.13)*** \\ 
  Minor injury & \phantom{-}1.70 (0.11)*** & \phantom{-}1.24 (0.06)*** & \phantom{-}1.45 (0.07)*** & \phantom{-}1.85 (0.10)*** \\ 
  Serious injury & \phantom{-}2.74 (0.28)*** & \phantom{-}1.99 (0.12)*** & \phantom{-}2.19 (0.14)*** &  \\ 
  During day & \phantom{-}0.90 (0.05).   & \phantom{-}1.24 (0.05)*** & \phantom{-}0.95 (0.04)    & \phantom{-}0.93 (0.04)    \\ 
  Private location & \phantom{-}1.33 (0.10)*** & \phantom{-}1.08 (0.05).   & \phantom{-}1.40 (0.07)*** & \phantom{-}1.34 (0.08)*** \\ 
  Firearm present & \phantom{-}1.03 (0.16)    & \phantom{-}0.97 (0.10)    & \phantom{-}0.99 (0.10)    &  \\ 
  Other weapon present & \phantom{-}0.90 (0.13)    & \phantom{-}0.94 (0.10)    & \phantom{-}0.90 (0.08)    & \phantom{-}0.85 (0.10)    \\ 
  Multiple offenses & \phantom{-}1.90 (0.03)*** & \phantom{-}1.58 (0.04)*** & \phantom{-}1.14 (0.01)*** & \phantom{-}1.15 (0.01)*** \\ 
  Offense only attempted & \phantom{-}0.86 (0.12)    & \phantom{-}0.98 (0.09)    &  &  \\ 
  MSA, central city & \phantom{-}0.68 (0.05)*** & \phantom{-}0.89 (0.05).   & \phantom{-}0.94 (0.05)    & \phantom{-}0.92 (0.06)    \\ 
  MSA, not central city & \phantom{-}0.84 (0.07)*   & \phantom{-}1.02 (0.06)    & \phantom{-}1.10 (0.06)    & \phantom{-}1.02 (0.07)    \\ 
  Nb. of officers per 1000 capita (ORI) & \phantom{-}1.00 (0.00)*** & \phantom{-}0.99 (0.00)*** & \phantom{-}0.99 (0.00)*** & \phantom{-}1.00 (0.00)*** \\ 
Log population served (ORI) & \phantom{-}0.96 (0.00)*** & \phantom{-}0.83 (0.01)*** & \phantom{-}0.88 (0.00)*** & \phantom{-}0.91 (0.00)*** \\ 
       \bottomrule
    \end{tabular}
    }
    {\raggedright {\footnotesize Significance codes: $p<0.001$ ‘***’, $p<0.01$ ‘**’, $p<0.05$ ‘*’, $p<0.1$ ‘.’}. \\
    Notes: The table shows the odds ratios of the logistic regression coefficients
    for $q$, the likelihood of arrest that accounts for unreported crimes. The
    model is fitted on 2006--2015 NIBRS data and uses the estimates of $\pi$
    obtained from 2003--2020 NCVS data. Standard errors are reported inside
    parentheses. Significance codes correspond to the p-values ($p$) of Wald
    tests to assess the statistical significance of the odds ratios. Year- and
    state-level fixed effects are included in the regression model but are
    omitted from the table. ``ORI'' stands for ``originating agency identifier'', a regressor whose value is specific to that law enforcement agency.\par} 
    \end{table}

\subsection{Racial disparities in the likelihood of arrest accounting for crime characteristics}
We estimate the likelihood of arrest conditional on crimes
characteristics via the two-step logistic regression detailed in Section
\ref{sec:methods}. The resulting odds ratios of the coefficient estimates are reported in Table
\ref{tab:regression_q}. In case of robbery offenses, we find that there is a
positive and statistically significant association between whether the offender
is white and the likelihood that the incident results in arrest. Provided that our
model is correctly specified, these results would indicate that white offenders
are more likely to be arrested for robbery than black offenders, ceteris
paribus. The estimates of this coefficient for the other types of crimes are
close to zero and not statistically significant. Thus, the estimated disparities disappear
once we account for crime characteristics other than the offender's race. 

One outstanding concern is that our logistic regression model
estimated on NCVS data may not accurately capture the location-specific patterns in crime
reporting existing in the data (e.g., due to omitted variable bias or modeling
misspecification). For example, by studying restricted-use NCVS data
\citet{baumer2002neighborhood} and \citet{xie2012racial} report significant
variations in crime reporting rates across neighborhoods. Although the available data do
not allow us to analyze reporting rates at the level of the individual law
enforcement agencies, we can still assess whether regional patterns are
accounted for by employing a flexible modeling approach. Thus, we run the
two-step regression analysis using the estimates of the SuperLearner in place of 
those from the logistic regression on NCVS. The estimates of the offender's race
coefficients produced by this approach are close to those
presented in Table \ref{tab:regression_q}.  

We also analyze how the odds ratios of the offender's race coefficient estimates vary under
various configurations of the covariates distributions through the focal slope
model diagnostics described in Section \ref{sec:emp_analysis}. The results of
the diagnostics are reported in the Appendix. We find that both the magnitude and the direction of the coefficients
estimates vary with the characteristics of the crimes. The most notable pattern
is the change in the association between the likelihood of arrest and the
offender's racial group when either only black or white victims are considered. 
This suggests an interaction between the two covariates. For example, in case of
assaults we observe that the estimates association between the offender being white and the likelihood of arrest is close to zero when the victims being considered are white individuals, but it is large and positive in case of black victims. This suggests
that, ceteris paribus, white offenders may be more likely to be arrested than
black offenders only when they commit interracial crimes.
For sex offenses, however, we find that the association is negative in case of
crimes with black victims and close to zero otherwise.

The estimates in Table \ref{tab:regression_q} 
can be compared with those obtained from a logistic
regression model fitted directly on NIBRS data without accounting for unreported
crimes (see the Appendix). The association between the offender being white and arrests estimated by the models without adjustments for unreported crimes is stronger (and positive) compared to the estimates in Table \ref{tab:regression_q} in case of robberies and assaults. 
We focus on two other examples of differences between coefficient estimates that stand out. 
First,
the regression model that accounts for unreported crimes estimates a stronger
positive association between the victim being injured (vs. no injury) and the
occurrence of an arrest. This pattern could be explained by the fact that
incidents without injuries are less likely to be reported to law enforcement
(see the Appendix). Second, the logistic
regression without adjustments estimates a negative and strong association between the
presence of a firearm (vs. no weapon) and arrests. This association disappears
once unreported crimes are accounted for, again potentially because incidents with
firearms tend to be more likely to be reported.

\subsection{Racial disparities in arrests for incidents with multiple offenders}
The estimates of the crime reporting rates for incidents involving more than one
offender are similar to those for crimes with individual offenders in Table
\ref{tab:summary_stats_nibrs}. However, we find that arrest rates computed solely on police-recorded data for these
incidents are substantially lower in case of aggravated assaults (by more than
10\%), followed by robbery and simple assaults (within a 5\% difference). By
contrast, arrest rates for sex offenses with multiple offenders are marginally
higher than those of crimes with individual offenders. 
Arrest rates shrink proportionally within racial groups once unreported
crimes are taken into account. We continue to observe that white offenders are arrested more often than
black offenders across all crime types. The only exception is robbery for which
the reduction is limited to white offenders but it is not large enough to
reverse the sign of the disparity.

We first fit the two-step regression model using GEEs on only incidents with
multiple offenders. The model estimates that, conditionally on other crime characteristics, white offenders
face a higher likelihood of arrest than black offenders across all offense types. 
We next fit the same model specification on incidents with both single and multiple offenders. In doing so, we need to keep in mind that only about one in ten
incidents of assault and sex offenses are committed by multiple offenders, and
these incidents generally have few offenders. An exception is
represented by robbery for which half of the incidents involve multiple
offenders. We find that white offenders are associated with a
higher likelihood of arrest in case of robbery (estimate is 0.2 with standard error equal to 0.04),
while the other estimated associations are virtually zero (full results in Table
\ref{tab:regression_q_gees}).

\section{Limitations}\label{sec:limitations}
Our empirical analysis relies on a series 
of assumptions about the modeling and data that may not hold in reality. 
One key limitation of the modeling is that the assumed independence across incidents may be violated.
For example, when the same offender is part
of multiple separate crime incidents,
arrest outcomes become correlated. 
Given identifying offender-level information, we could, in principle, correct 
the variance estimates to account for this dependence \citep{andrews1992improved, white2014asymptotic}, but such data is not available. 

Another limitation of our modeling approach
is the potential misspecification of the regression function
used in estimating the likelihood of police notification. 
Through the sensitivity analysis in Section \ref{sec:results}, 
we have shown that employing a more flexible classifier on
NCVS data yields results that are similar 
to those of the logistic regression for most offense types. Leveraging the model diagnostics, we have shown that
the logistic regression model fitted on NIBRS was misspecified.
Consequently, the coefficient estimates require careful interpretation; 
see \citet{buja2019models} and \citet{berk2019assumption} for detailed treatments of this topic, 
and \citet{fogliato2021validity} for a discussion 
of the limitations of similar approaches.

Certain variation in reporting
rates may also fail to be captured specifically 
due to omitted variable bias. 
For example, the NCVS data that are used in our analysis 
contain little information on the geographical location.  Obtaining and incorporating this information may influence the results 
\citep{baumer2002neighborhood, xie2012racial, xie2019crime}. 
Although analyses of the restricted-use NCVS data 
would overcome some of these issues, 
relevant pieces of information, 
such as the specific location of the victimization,
may simply be missing from the data \citep{cernat2021estimating}.

Even more importantly, our analysis suffers 
from limitations related to the nature of the data. 
These limitations are not unique to our study; they have been discussed in a plethora of
criminological works. 
Firstly, the recorded data may be of poor quality. 
With respect to survey data, measurement errors arising from sampling design, data
collection, victims' recollection of the events and untruthful reporting affect
the quality of the data. What victims report in the survey may not always
coincide with the same information that is recorded in police data.

Information in NIBRS may not always accurately reflect the characteristics of the crime incident. 
In this work, for example, we have observed that NCVS respondents 
were far more likely to report serious injuries in case of sex offenses 
than what was recorded in NIBRS data. 
This pattern is unlikely to be explained 
solely by differences in the underlying populations. 
Overall, police data can be seen as an artefact 
of a manipulation process \citep{richardson2019dirty}.
It is also possible that instances of wrongful arrests, 
which we do not consider in the analysis,
may be present in the data \citep{loeffler2019measuring}. 

In addition to issues of data quality, the data are missing certain information that we hypothesize being relevant to our analysis.
For example, we included Hispanics in the analysis
because, as ethnicity information is not always recorded 
(and when recorded it can be imprecise), this population could not be entirely excluded from the sample.
However, there is evidence that this ethnic group 
may be characterized by unique offending and reporting behaviors
\citep{steffensmeier2011reassessing,roberts2011hispanic,rennison2010investigation}. 

One further limitation of our analysis concerns the
matching of offense categories between the NCVS and the NIBRS, that do not
perfectly map.
However, even if the definitions were to fully overlap,
the type of offense that is reported by the victim 
may not correspond to the coding of the same offense 
done by law enforcement. 
This potential issue may affect mainly simple assaults, 
which represent the least serious type of crime 
considered in this analysis.
We also do not consider incidents where the victim does not personally see
the offender, which represent a minimal share of all incidents reported by
NCVS respondents. Thus, together with the fact that not all reported crime may
be recorded, this implies that our estimates of the arrest rates represent upper
bounds of the true quantities.

\section{Discussion}\label{sec:discussion}

In this work, we have proposed estimators
of the rates of police notification and of arrest 
for nonfatal violent crime on NIBRS 
that leverage data of unreported crimes from NCVS. 
These estimators are consistent 
and asymptotically normal under some assumptions. 
Our empirical investigation of racial disparities 
revealed that incidents are marginally more likely 
to be reported to the police 
when the offender is black.
However, in cases of assaults and robbery, crimes with black offenders are generally less likely to result in arrests.
These differences are small after accounting for crime characteristics.
Additionally, the model diagnostics showed 
that the direction of these disparities 
varies with crime characteristics.

We envision three directions in which the proposed methodology 
can be further developed. 
First, we could employ nonparametric methods 
in place of the logistic regression model.
In this work, we obtained asymptotic normality for a two-step
estimation approach where logistic regression was used in both steps.
Nonparametric approaches that yield similar convergence rates 
could be applied in the first step.
For example, the method of kernels 
introduced in \citet{racine2004nonparametric}
can handle both categorical and continuous data.
Although our empirical analysis did not uncover 
significant differences in the estimates 
of the likelihood of police notification produced by parametric and nonparametric models, 
the latter is more flexible and thus 
may be more suitable in certain applications. 

Secondly, mixed effects logistic regression models 
could be employed for the estimation of the likelihood of arrest. 
This represents a modeling approach 
often used in the social sciences.
In this work, we have employed a model 
that does not account for city- or agency-level effects, 
which may drive many of the disparities, 
e.g., see the results of \citet{fogliato2021validity}.   
Third, we assume covariate shift between NCVS and NIBRS.
Future work could use a reweighted loss 
to adjust for the shift in the two datasets. 

Our study opens multiple avenues of research
in the criminology field as well.
Despite a longstanding interest 
in the ``dark figure of crime'' \citep{skogan1977dimensions},
how to accurately characterize this figure 
remains challenging and not well understood. 
Our methodology represents one way through
which it can be described and its magnitude be assessed. 
It would be interesting to compare results 
obtained through our methodology 
with those from the simulation-based approach 
proposed by \citet{buil2021accuracy}.
Future work may also leverage information about the socioeconomic status of the victim, which is available in NCVS data, and of the characteristics of the population in the police agency, which can be obtained from auxiliary data sources and merged with NIBRS data. These aspects were not considered in our work.  

Similar to past studies on NCVS and NIBRS, the results described in this paper build on several assumptions. Some of these assumptions may be violated.  
We hope that, over time, 
police records will become more accurate and comprehensive,
and that detailed information 
about incidents will be made available through NIBRS,
allowing for improved analyses.

\begin{acks}
The authors thank Richard Berk for valuable feedback and suggestions, and David Buil-Gil for helpful discussions. We also thank reviewers, associate editor, and editor for their insightful comments. 
\end{acks}

\clearpage

\begin{appendix}

\section{Additional results}\label{sec:app_results}

This section contains additional results. Table \ref{tab:regression_pi} shows
the odds ratios of the coefficients estimates of the logistic regression model run on NCVS data.
Table \ref{tab:regression_alpha} presents the results of the regression analysis
targeting the likelihood of arrest for crimes known to law enforcement with
individual offenders. Figure
\ref{fig:sensitivity_logistic_vs_superlearner} shows the predicted likelihood of crime reporting $\pi$
produced by logistic regression and SuperLearner for each observation in NCVS data. 
Lastly, Figure \ref{fig:model_diagnostics} shows the focal slope model diagnostics.

\editedinline{
\begin{table}[!htb]
\centering
\begin{threeparttable}[b]
  \caption{Regression Results (2003--2020 NCVS Data): Estimating Police Notification Likelihood for Single-Offender Incidents Using Logistic Regression with Survey Weights.}
  \centering
  \label{tab:regression_pi}
  \begin{tabular}{ll}
    \toprule
  \textbf{Variable} & \textbf{Odds ratio estimate} \\ 
    \midrule
  Age of off. 12-14 & \phantom{-}1.36 (0.43)    \\ 
  Age of off. 15-17 & \phantom{-}2.17 (0.58)**  \\ 
  Age of off. 18-20 & \phantom{-}2.00 (0.62)*   \\ 
  Age of off. 21-29 & \phantom{-}2.96 (0.84)*** \\ 
  Age of off. 30+ & \phantom{-}2.66 (0.72)*** \\ 
  Off. is male & \phantom{-}0.87 (0.08)    \\ 
  Off. is white & \phantom{-}0.95 (0.08)    \\ 
  Age of victim & \phantom{-}1.01 (0.00)*** \\ 
  Victim is male & \phantom{-}0.90 (0.07)    \\ 
  Victim is white & \phantom{-}0.85 (0.09)    \\ 
  Off. is acquaintance & \phantom{-}0.71 (0.06)*** \\ 
  Off. is family member & \phantom{-}0.94 (0.13)    \\ 
  Off. is intimate partner & \phantom{-}0.95 (0.12)    \\ 
  Minor injury & \phantom{-}1.41 (0.11)*** \\ 
  Serious injury & \phantom{-}2.83 (0.35)*** \\ 
  During day & \phantom{-}0.96 (0.07)    \\ 
  Private location & \phantom{-}1.37 (0.12)*** \\ 
  Firearm present & \phantom{-}1.39 (0.27).   \\ 
  Other weapon present & \phantom{-}0.81 (0.14)    \\ 
  Offense only attempted & \phantom{-}0.82 (0.14)    \\ 
  MSA, central city & \phantom{-}0.90 (0.09)    \\ 
  MSA, not central city & \phantom{-}1.03 (0.10)    \\ 
  Crime is robbery & \phantom{-}0.85 (0.14)    \\ 
  Crime is sex offense & \phantom{-}0.23 (0.05)*** \\ 
  Crime is simple assault & \phantom{-}0.57 (0.10)**  \\ 
     \bottomrule
  \end{tabular}
  \begin{tablenotes}[flushleft] \item Significance codes: $p<0.001$ ‘***’, $p<0.01$ ‘**’, $p<0.05$ ‘*’, $p<0.1$ ‘.’.
  \end{tablenotes}
  \end{threeparttable}
  \end{table}
  }

\begin{table}[!htb]
  \centering
\begin{threeparttable}
\caption{Regression Results: Assessing Racial Disparities in Arrest Likelihood Based on Known Incidents to Police Agencies, Using Crime Characteristics $\alpha(X)$.}\label{tab:regression_alpha}
\begin{tabular}{lllll}
  \toprule
  \multicolumn{1}{l}{\textbf{Variable}} & 
\multicolumn{1}{c}{\textbf{Sex offense}} & 
  \multicolumn{1}{c}{\textbf{Robbery}} & 
\multicolumn{1}{c}{\textbf{Aggravated assault}} & 
  \multicolumn{1}{c}{\textbf{Simple assault}} \\
  \midrule
Intercept & \phantom{-}0.42 (0.03)*** & \phantom{-}1.53 (0.16)*** & \phantom{-}1.55 (0.05)*** & \phantom{-}0.93 (0.01)*** \\ 
  Age of offender & \phantom{-}1.01 (0.00)*** & \phantom{-}1.01 (0.00)*** & \phantom{-}1.00 (0.00)*** & \phantom{-}0.99 (0.00)*** \\ 
  Off. is male & \phantom{-}1.15 (0.04)*** & \phantom{-}0.93 (0.03)*   & \phantom{-}0.94 (0.01)*** & \phantom{-}0.98 (0.00)*** \\ 
  Off. is white & \phantom{-}1.04 (0.02)*   & \phantom{-}1.33 (0.03)*** & \phantom{-}1.09 (0.01)*** & \phantom{-}1.05 (0.00)*** \\ 
  Age of victim & \phantom{-}0.98 (0.00)*** & \phantom{-}1.00 (0.00)*   & \phantom{-}1.00 (0.00)*** & \phantom{-}1.01 (0.00)*** \\ 
  Victim is male & \phantom{-}0.90 (0.02)*** & \phantom{-}0.93 (0.02)*** & \phantom{-}1.01 (0.01)    & \phantom{-}0.99 (0.00)*** \\ 
  Victim is white & \phantom{-}0.94 (0.02)*** & \phantom{-}1.12 (0.02)*** & \phantom{-}1.21 (0.01)*** & \phantom{-}1.22 (0.01)*** \\ 
  Off. is acquaintance & \phantom{-}1.24 (0.02)*** & \phantom{-}1.80 (0.04)*** & \phantom{-}1.25 (0.01)*** & \phantom{-}0.85 (0.00)*** \\ 
  Off. is family member & \phantom{-}1.41 (0.03)*** & \phantom{-}2.67 (0.15)*** & \phantom{-}1.98 (0.02)*** & \phantom{-}1.58 (0.01)*** \\ 
  Off. is intimate partner & \phantom{-}1.56 (0.03)*** & \phantom{-}2.91 (0.11)*** & \phantom{-}2.49 (0.02)*** & \phantom{-}2.04 (0.01)*** \\ 
  Minor injury & \phantom{-}1.35 (0.02)*** & \phantom{-}1.02 (0.02)    & \phantom{-}1.27 (0.01)*** & \phantom{-}1.76 (0.00)*** \\ 
  Serious injury & \phantom{-}1.31 (0.03)*** & \phantom{-}1.26 (0.04)*** & \phantom{-}1.38 (0.01)*** &  \\ 
  During day & \phantom{-}0.94 (0.01)*** & \phantom{-}1.35 (0.02)*** & \phantom{-}0.98 (0.01)**  & \phantom{-}0.98 (0.00)*** \\ 
  Private location & \phantom{-}0.98 (0.01)*   & \phantom{-}0.89 (0.02)*** & \phantom{-}1.23 (0.01)*** & \phantom{-}1.07 (0.00)*** \\ 
  Firearm present & \phantom{-}0.74 (0.04)*** & \phantom{-}0.79 (0.02)*** & \phantom{-}0.82 (0.01)*** &  \\ 
  Other weapon present & \phantom{-}1.08 (0.03)*** & \phantom{-}1.07 (0.02)**  & \phantom{-}1.04 (0.01)*** & \phantom{-}0.99 (0.01)*   \\ 
  Multiple offenses & \phantom{-}2.37 (0.05)*** & \phantom{-}1.82 (0.06)*** & \phantom{-}1.20 (0.01)*** & \phantom{-}1.22 (0.01)*** \\ 
  Offense only attempted & \phantom{-}1.01 (0.03)    & \phantom{-}1.12 (0.03)*** &  &  \\ 
 MSA, central city & \phantom{-}0.68 (0.01)*** & \phantom{-}0.90 (0.03)**  & \phantom{-}0.97 (0.01)**  & \phantom{-}0.96 (0.00)*** \\ 
  MSA, not central city & \phantom{-}0.77 (0.01)*** & \phantom{-}0.99 (0.03)    & \phantom{-}1.11 (0.01)*** & \phantom{-}0.99 (0.00)**  \\ 
  Nb. of officers per 1000 capita (ORI) & \phantom{-}1.00 (0.00)*** & \phantom{-}0.99 (0.00)*** & \phantom{-}0.99 (0.00)*** & \phantom{-}1.00 (0.00)*** \\ 
Log population served (ORI) & \phantom{-}0.94 (0.00)*** & \phantom{-}0.80 (0.01)*** & \phantom{-}0.84 (0.00)*** & \phantom{-}0.86 (0.00)*** \\ 
   \bottomrule
\end{tabular}
\begin{tablenotes}[flushleft] \item Significance codes: $p<0.001$ ‘***’, $p<0.01$ ‘**’, $p<0.05$ ‘*’, $p<0.1$ ‘.’. \\
Notes: The table shows the odds ratios of the logistic regression coefficients for $\alpha$,
the likelihood of arrest for incidents known to police agencies, fitted on the
NIBRS data considered in the analysis. Standard errors are
reported inside parentheses. Significance codes correspond to the p-values
($p$) of Wald tests to assess the statistical significance of the odds ratios.
Year- and
    state-level fixed effects are included in the regression model but are
    omitted from the table.
  \end{tablenotes}
\end{threeparttable}
\end{table}

\begin{table}[!htb]
  \centering
\begin{threeparttable}
    \caption{Regression Results (NIBRS Data): Assessing Racial Disparities in Arrest Likelihood for Incidents with One or More Offenders, Using NCVS Data for Police Notification Estimates.}\label{tab:regression_q_gees} %
    \begin{tabular}{lllll}
      \toprule
      \multicolumn{1}{l}{\textbf{Variable}} & 
    \multicolumn{1}{c}{\textbf{Sex offense}} & 
      \multicolumn{1}{c}{\textbf{Robbery}} & 
    \multicolumn{1}{c}{\textbf{Aggravated assault}} & 
      \multicolumn{1}{c}{\textbf{Simple assault}} \\
      \midrule
Intercept & \phantom{-}0.07 (0.01)*** & \phantom{-}0.92 (0.12)    & \phantom{-}0.40 (0.05)*** & \phantom{-}0.18 (0.02)*** \\ 
  Age of offender & \phantom{-}1.01 (0.00)*** & \phantom{-}1.01 (0.00)*** & \phantom{-}1.01 (0.00)*** & \phantom{-}1.00 (0.00)*   \\ 
  Off. is male & \phantom{-}0.88 (0.06).   & \phantom{-}0.82 (0.03)*** & \phantom{-}0.91 (0.04)*   & \phantom{-}1.00 (0.06)    \\ 
  Off. is white & \phantom{-}1.00 (0.06)    & \phantom{-}1.18 (0.04)*** & \phantom{-}1.04 (0.04)    & \phantom{-}1.00 (0.05)    \\ 
  Age of victim & \phantom{-}0.99 (0.00)**  & \phantom{-}1.00 (0.00)*** & \phantom{-}1.01 (0.00)*** & \phantom{-}1.01 (0.00)*** \\ 
  Victim is male & \phantom{-}0.83 (0.06)**  & \phantom{-}0.77 (0.03)*** & \phantom{-}0.76 (0.03)*** & \phantom{-}0.77 (0.04)*** \\ 
  Victim is white & \phantom{-}0.78 (0.06)**  & \phantom{-}0.93 (0.05)    & \phantom{-}0.96 (0.05)    & \phantom{-}0.95 (0.06)    \\ 
  Off. is known & \phantom{-}0.90 (0.06)    & \phantom{-}1.01 (0.04)    & \phantom{-}1.12 (0.05)**  & \phantom{-}0.97 (0.05)    \\ 
  Minor injury & \phantom{-}1.64 (0.11)*** & \phantom{-}1.21 (0.05)*** & \phantom{-}1.50 (0.07)*** & \phantom{-}1.89 (0.10)*** \\ 
  Serious injury & \phantom{-}2.83 (0.29)*** & \phantom{-}1.95 (0.10)*** & \phantom{-}2.30 (0.13)*** &  \\ 
  During day & \phantom{-}0.95 (0.05)    & \phantom{-}1.30 (0.05)*** & \phantom{-}1.00 (0.04)    & \phantom{-}0.94 (0.04)    \\ 
  Private location & \phantom{-}1.47 (0.10)*** & \phantom{-}1.32 (0.05)*** & \phantom{-}1.70 (0.07)*** & \phantom{-}1.73 (0.09)*** \\ 
  Firearm present & \phantom{-}1.05 (0.14)    & \phantom{-}1.08 (0.09)    & \phantom{-}0.97 (0.08)    &  \\ 
  Other weapon present & \phantom{-}1.05 (0.13)    & \phantom{-}1.03 (0.09)    & \phantom{-}1.00 (0.08)    & \phantom{-}0.96 (0.10)    \\ 
  Multiple offenses & \phantom{-}1.84 (0.05)*** & \phantom{-}1.60 (0.03)*** & \phantom{-}1.12 (0.01)*** & \phantom{-}1.14 (0.01)*** \\ 
  Offense only attempted & \phantom{-}0.84 (0.11)    & \phantom{-}0.91 (0.08)    &  &  \\ 
  MSA, central city & \phantom{-}0.60 (0.05)*** & \phantom{-}0.76 (0.04)*** & \phantom{-}0.85 (0.04)**  & \phantom{-}0.83 (0.05)**  \\ 
  MSA, not central city & \phantom{-}0.82 (0.07)*   & \phantom{-}0.94 (0.05)    & \phantom{-}1.08 (0.06)    & \phantom{-}1.01 (0.07)    \\
  Nb. of officers per 1000 capita (ORI) & \phantom{-}1.00 (0.00)*** & \phantom{-}0.99 (0.00)*** & \phantom{-}0.99 (0.00)*** & \phantom{-}1.00 (0.00)*** \\ 
  Log population served (ORI) & \phantom{-}0.95 (0.00)*** & \phantom{-}0.81 (0.00)*** & \phantom{-}0.88 (0.00)*** & \phantom{-}0.91 (0.00)*** \\ 
       \bottomrule
    \end{tabular}
\begin{tablenotes}[flushleft] \item  Significance codes: $p<0.001$ ‘***’, $p<0.01$ ‘**’, $p<0.05$ ‘*’, $p<0.1$ ‘.’. \\
    Notes: The table shows the odds ratios of the regression coefficients
    for $q$, the likelihood of arrest that accounts for unreported crimes, estimated via generalized estimating equations (GEEs). The
    model is fitted on NIBRS data and uses the estimates of the likelihood of crime reporting $\pi$
    obtained from NCVS data. Standard errors are reported inside
    parentheses. Significance codes correspond to the p-values ($p$) of Wald
    tests to assess the statistical significance of the odds ratios. Year- and
    state-level fixed effects are included in the regression model but are
    omitted from the table.
  \end{tablenotes}
\end{threeparttable}
\end{table}

  \begin{figure}[!htb]
    \begin{center}
    \includegraphics[width=0.9\textwidth]{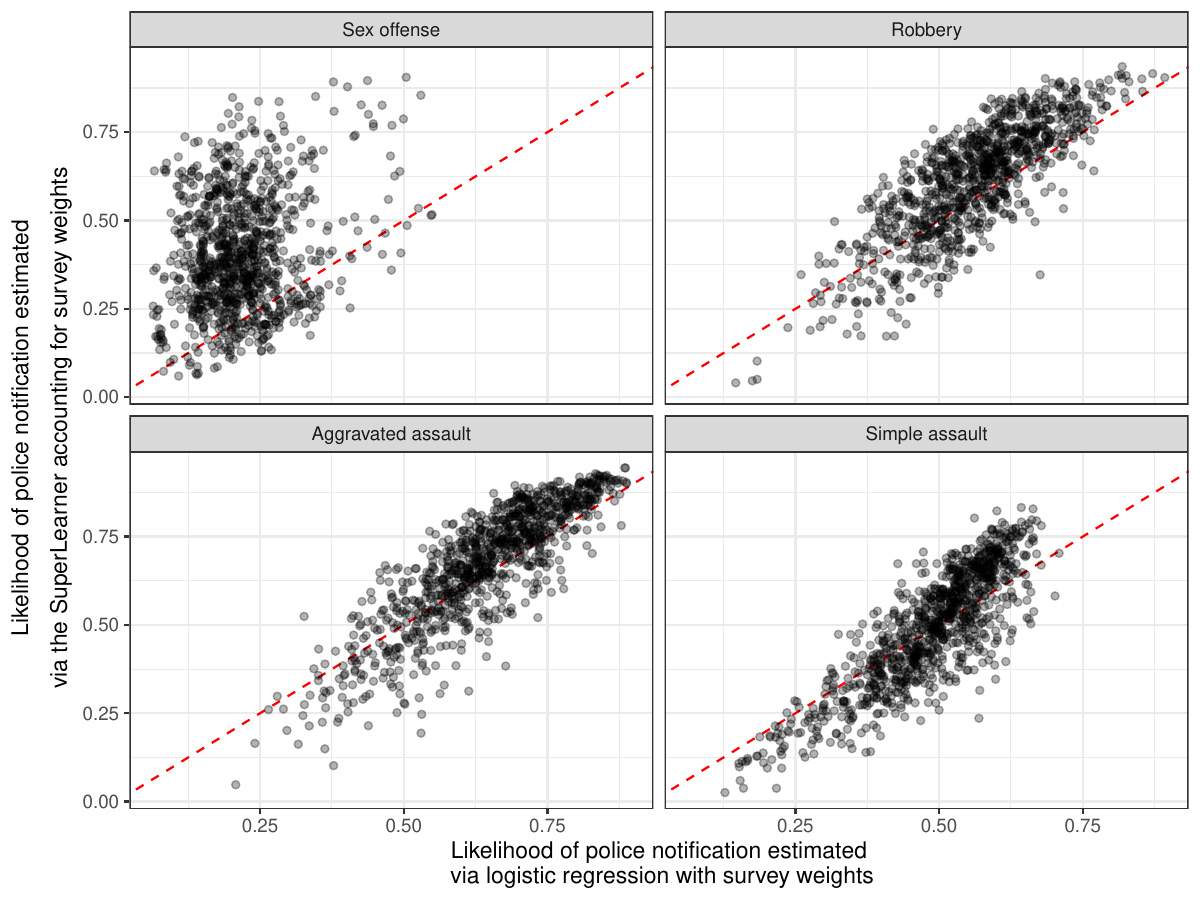}
    \end{center}
    \caption{Estimates of the likelihood of police notification $\pi$ for
    observations in 2006--2015 NIBRS data produced by the logistic regression
    (horizontal axis) and by the SuperLearner (vertical axis) fitted on 2003--2020
    NCVS data. For visualization purposes, we show the estimates relative to 1000
    randomly sampled observations for each crime type.}
    \label{fig:sensitivity_logistic_vs_superlearner}
\end{figure}

\begin{figure}[!htb]
  \centering
  \includegraphics[width=\linewidth,  keepaspectratio]{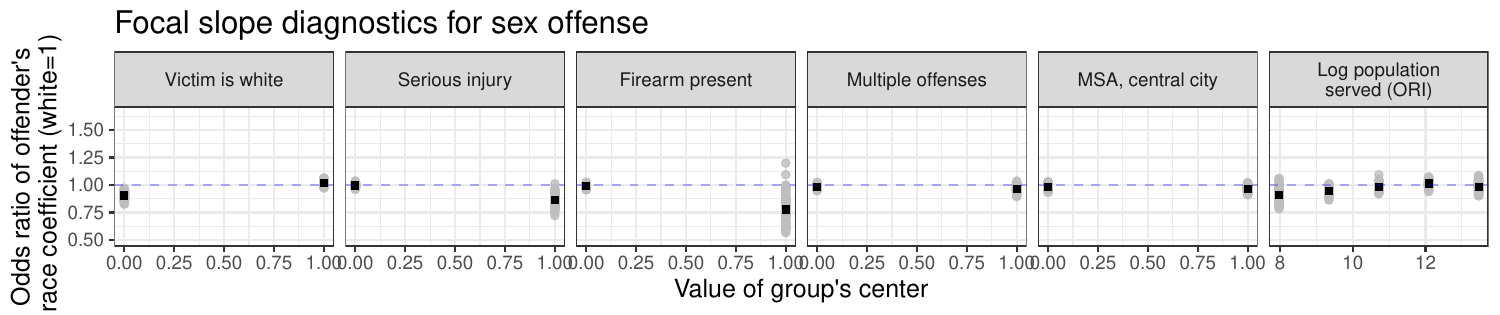}
  \includegraphics[width=\linewidth, keepaspectratio]{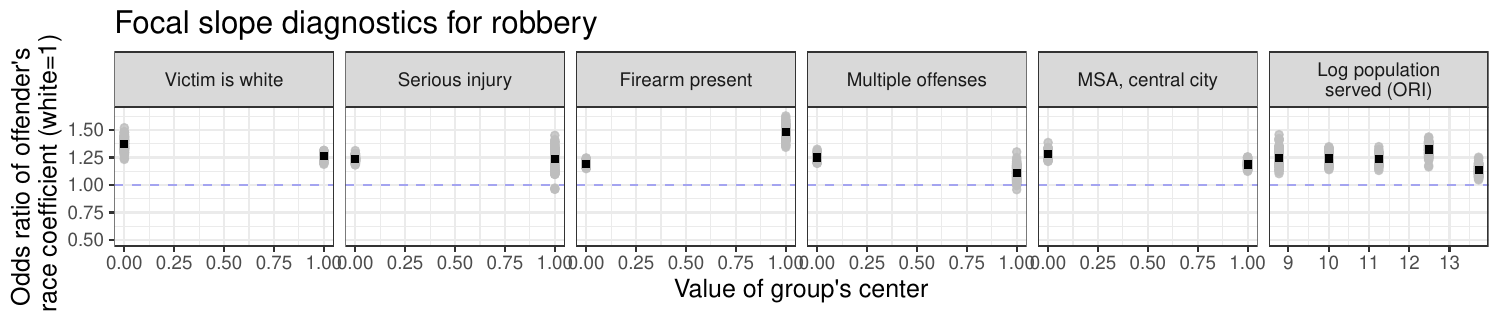}
  \includegraphics[width=\linewidth, keepaspectratio]{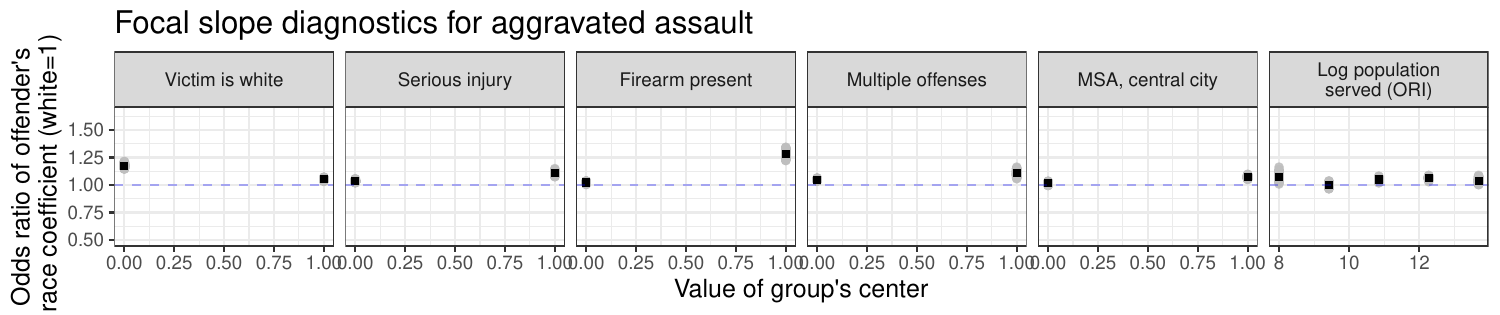}
  \includegraphics[width=\linewidth,  keepaspectratio]{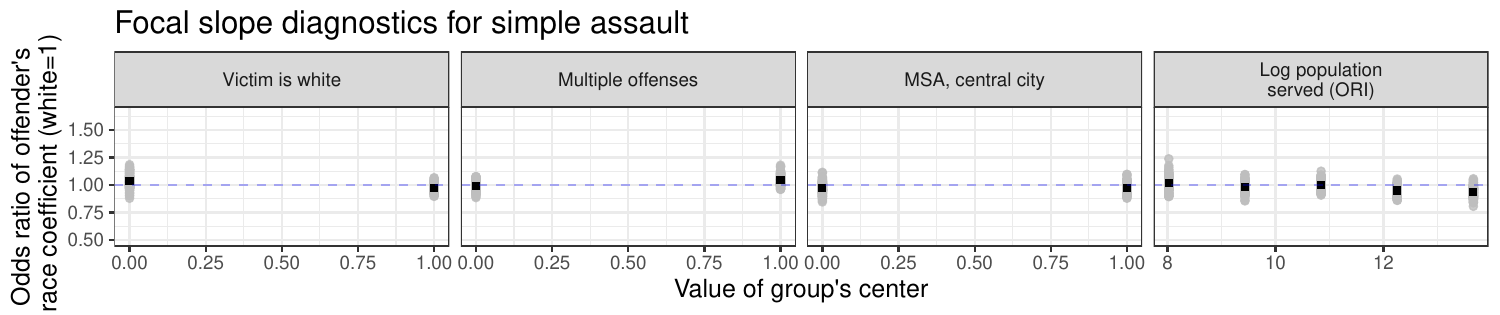}
  \caption{{\normalfont ``Focal slope'' model diagnostic for the logistic
  regression model for $q$, the likelihood of arrest that accounts for unreported crimes, on 2006--2015 NIBRS data whose odds ratios of the coefficients estimates
  are presented in Table~\ref{tab:regression_q}. Only incidents with one
  offender are considered. Methodological details are described
  in Section \ref{sec:emp_analysis}. The grey points correspond to the
  coefficient estimates relative to the offender's race (white=1) obtained by
  fitting the logistic regression model on each of 100 bootstrapped datasets, for
  each variable (panel's title) and variable's grid value (value on the grid,
  horizontal axis). The black dots correspond to the means of such estimates. We
  observe that the size and sign of the values of the black dots vary across the
  range of the regressors. This suggests the presence of interactions between
  race and the regressors, which in turn indicates that our modeling approach is
  misspecified. }}\label{fig:model_diagnostics}
\end{figure}
\clearpage
\section{Details and Proofs}\label{sec:app_proofs}

This section contains the proofs of the results presented in Section
\ref{sec:methods}. In Section \ref{app:proofs_ncvs}, we present the proof relative to the consistency and
asymptotic normality of the coefficient estimates for the logistic regression parameters 
obtained on survey data (Proposition
\ref{app_prop:wlog_asymptotics}). Then, in Section \ref{app:proofs_nibrs} we present the asymptotic properties of
the estimators of the total number of offenses $N$, expected rate of police
notification $\pi^*$, and expected rate of arrest $q^*$ (Lemma
\ref{app_lemma:change_measure}, Propositions \ref{app_prop:asymptotic_N},
\ref{app_prop:asymptotic_pi}, and \ref{app_prop:asymptotic_q}). Lastly, we describe
the results for estimation via the two-step logistic regression (Propositions \ref{app_prop:q_asymptotics_logreg} and
\ref{prop:gee_convergence}). 

\subsection{Estimation on NCVS}\label{app:proofs_ncvs}

We provide some additional details on the framework presented in Section \ref{sec:est_ncvs} before turning to the proof of Proposition \ref{app_prop:wlog_asymptotics}. As a reminder, %
our aim is to make inference on superpopulation parameters. This differs from the finite population framework, for which logistic regression parameter estimation has been studied by \citet{binder1983variances}.
In the following, the subscripts $P^v$ and $\psi$ in the probability $\mbbP$ and expectation $\mbbE$ operators denote superpopulation and sampling design,
respectively.

Formally, let the superpopulation target
parameter $\gamma_0\in \text{Int}(\Gamma)$ be defined by the following moment
condition
\begin{gather*}
    \mbbE_{P^v}[h^v(R^v, Z^v;\gamma)] = 0,
\end{gather*}
where $h^v(R^v, Z^v;\gamma) := (R^v-\pi^v(Z^v;\gamma))Z^v$. The parameter
$\tilde{\gamma}\in \text{Int}(\Gamma)$ and the design-based estimator
$\hat\gamma\in\Gamma$ are the solutions to
\citep{lumley2017fitting},
\begin{gather}
    \sum_{i=1}^{N^v} h^v(R_i^v, Z_i^v;\gamma) = 0, \text{ and} \label{eq:survey_full}\\
    \sum_{i=1}^{N^v} w_i I_i h^v(R_i^v, Z_i^v;\gamma) =0 \label{eq:design_based}%
\end{gather}
respectively. 
The estimating equation \eqref{eq:survey_full} is unbiased for $\gamma_0$.
Conditionally on the finite population $V^{N^v}$, equation \eqref{eq:design_based} is
unbiased for $\tilde\gamma$ provided that $\mbbE_{\psi}[I_iw_i]=1$ for
$i=1,\dots,N^v$. Since we only have access to the observations for which $I_i=1$,
our estimation will be based on the estimating equation \eqref{eq:design_based}.
In the presence of endogenous or informative sampling, the estimate $\hat\gamma$ obtained by solving
\eqref{eq:design_based} may not coincide with the one we would obtain by solving
the unweighted estimating equation \citep{solon2015we}.

In order to establish the asymptotic properties of the estimator $\hat\gamma$
obtained by solving equation \eqref{eq:design_based} on the sample $V^{N^v}$, we
assume that the observations we have access to are sampled from a finite number
of strata with known size. Thus, for each stratum, our survey data represent a
sample of the finite population belonging to that stratum, which in turn
represents an \iid sample of the superpopulation distribution specific to that
stratum. 
The following
proposition borrows the setup from Theorem 1.3.9 in \citet{fuller2011sampling} %
and leverages the results of \citet{rubin2005two}.

\begin{proposition}\label{app_prop:wlog_asymptotics} Consider an increasing
    sequence of finite populations where the $N^v$-th population has size $N^v$
    and consists of $H\in\mathbb{Z}_{+}$ strata. The $h$-th stratum is formed by
    the $N^v_h$ observations $\mathcal{F}_{N^vh}=\{(Z^v_{N^vhi},
    R^v_{N^vhi})\}_{i=1}^{N^v_h}$ which represent an \iid sample of $(Z^v_{h},
    R^v_{h})\sim P^v_{h}$, for $h=1,\dots,H$, where $P^v_h$ is the distribution
    of the superpopulation of the specific stratum. Assume that
    $\norm{Z^v_h}_\infty < M$ for some $M>0$ and $h=1,\dots,H$. For the $h$-th
    stratum, we have access to a sample of observations that are drawn from
    $\mathcal{F}_{N^vh}$ according to some sampling design $\psi_{N^vh}$ and let
    $I_{N^vhi}=1$ if the $i$-th observation is selected, and $I_{N^vhi}=0$
    otherwise. Let $\{w_{N^vhi}\}_{i=1}^{N^v_h}$ indicate the set of weights
    associated with the sample in the $h$-th stratum where
    $w_{N^vhi}:=\mbbE_{\psi_{N^v}}[I_{N^vhi}]^{-1}$, and assume that $\max_{h,i}
    w_{N^vhi}<K$ for some $K>0$. We denote with $n^v_{N^vh}$ the (expected or
    fixed) sample size of the $h$-th stratum, with $n^v_{N^v}:=\sum_{h=1}^{H}n^v_{N^vh}$
    the size of the entire survey sample, and with $\lambda:=\lim_{N^v\rightarrow\infty}
    n_{N^v}^v/N^v$ the limit of the size of the surveyed population compared to the
    entire finite population. 
    Consider a sequence of stratified samples that is selected such that
    $N^v_{h}\rightarrow\infty$, $n^v_{N^vh}\rightarrow\infty$, and
    $\lim_{N^v\rightarrow\infty}N^v_{h}/N^v=\lim_{N^v\rightarrow\infty}n^v_{N^vh}/n^v_{N^v}
    = \beta_h\in(0,1]$, for $h=1,\dots,H$. The parameters $\gamma_0$ and
    $\tilde{\gamma}_{N^v}$, and the estimator $\hat\gamma_{N^v}$,
    with $\gamma_0, \tilde{\gamma}_{N^v}\in \text{Int}(\Gamma)$ and
    $\hat\gamma_{N^v}\in\Gamma$, solve respectively
    \begin{gather*}
    \sum_{h=1}^{H}\beta_h \mbbE_{P^v_h}[\mbbE_{\psi_{N^v h}}[h^v(R^v_{N^vhi}, Z^v_{N^vhi};\gamma)]]=0,\\
    G^v_{N^v}(\gamma) := \frac{1}{N^v}\sum_{h=1}^{H} \sum_{i=1}^{N^v_h}
    h^v(R^v_{N^vhi},
     Z^v_{N^vhi};\gamma)=0,\\
     \hat{G}^v_{N^v}(\gamma) := \frac{1}{n^v_{N^v}}\sum_{h=1}^{H} \sum_{i=1}^{N^v_h}
     w_{N^vhi} I_{N^vhi} h^v(R^v_{N^vhi},
     Z^v_{N^vhi};\gamma)=0.
    \end{gather*} 
    Assume that, conditionally on the finite population,
    \begin{equation*}
        \sqrt{n^v_{N^v}}\hat{G}^v_{N^v}(\gamma_{N^v})\overset{d} {\rightarrow}\mathcal{N}\left(0,\sum_{h=1}^H\beta_h\Xi^f_{h}\right)
    \end{equation*}
    as $n^v_{N^v}\rightarrow\infty$ and, in addition, for $\gamma\in\Gamma$, 
    \begin{equation}\label{app_eq:condition_on_sampling_mechanism}
        \lim_{N^v\rightarrow\infty} %
        \frac{1}{n^v_{N^v}}
        \sum_{h=1}^{H} \sum_{i=1}^{N_h} I_{N^vhi}w_{N^vhi}\nabla_{\gamma} h^v(Z_{N^vhi}, R_{N^vhi};\gamma)  = J^v(\gamma)
    \end{equation}
    where the positive definite covariance matrices $\Xi^f_h$, for
    $h=1,\dots,H$, and $J^v(\gamma)$ are nonstochastic in the population, and $J^v:=J^v(\gamma_0)=\lim_{N_h^v\rightarrow\infty}\nabla G^v_{N^v}(\gamma_0)$. Then 
    \begin{equation*}
     (\Sigma^v)^{-1/2}\sqrt{n^v_{N^v}}(\hat\gamma_{N^v} -
    \gamma_0)\overset{d} {\rightarrow}\mathcal{N}(0,I_d) 
     \end{equation*}
     as $n^v_{N^v}\rightarrow\infty$ where $\Sigma^v := (J^v)^{-1} [\sum_{h=1}^{H}\beta_h
     (  \mbbE_{P^v_h}\Xi^f_{h} + \lambda \Xi^s_{h})]  (J^v)^{-1}$ with the following
     matrices %
    \begin{gather*}
        \Xi^s_h := \text{Var}_{P^v_h}\left(h^v(R^v_{h}, Z^v_{h};\gamma_0)\right), \\
        J:=\sum_{h=1}^{H}\beta_h \mbbE_{P^v_h}\left[
    \nabla_\gamma h^v(R^v_h, Z^v_h;\gamma_0) (\nabla_\gamma h^v(R^v_h,
    Z^v_h;\gamma_0))^T \right]. 
    \end{gather*}
 \end{proposition}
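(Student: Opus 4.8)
The plan is to treat $\hat\gamma_{N^v}$ as the estimator of a two-phase sampling scheme and to decompose its fluctuation into a \emph{design} component, arising from drawing the survey sample out of the finite population, and a \emph{superpopulation} component, arising from viewing the finite population itself as an i.i.d.\ draw from $\{P^v_h\}_{h=1}^H$. Concretely, I would write $\hat\gamma_{N^v} - \gamma_0 = (\hat\gamma_{N^v} - \tilde\gamma_{N^v}) + (\tilde\gamma_{N^v} - \gamma_0)$ and analyze the two differences by conditioning on, and averaging over, the realized finite population, respectively. This mirrors the setup of Theorem 1.3.9 in \citet{fuller2011sampling} and the two-phase variance framework of \citet{rubin2005two}.

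First I would establish consistency, namely $\hat\gamma_{N^v}\overset{p}{\to}\gamma_0$ and $\tilde\gamma_{N^v}\overset{p}{\to}\gamma_0$. The logistic score $h^v(R^v,Z^v;\gamma)=(R^v-\pi^v(Z^v;\gamma))Z^v$ is smooth in $\gamma$ and, under $\norm{Z^v_h}_\infty<M$ and $\max_{h,i}w_{N^vhi}<K$, uniformly bounded on the compact set $\Gamma$, so a uniform law of large numbers applies to both $G^v_{N^v}(\gamma)$ and $\hat G^v_{N^v}(\gamma)$. Combined with identifiability of the root, which follows because the limiting Jacobian $J^v$ is positive definite so that $\gamma_0$ is the unique zero of the limiting moment, this yields consistency of both estimators.

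Next I would linearize each difference. A mean-value expansion of $\hat G^v_{N^v}(\hat\gamma_{N^v})=0$ about the finite-population parameter $\tilde\gamma_{N^v}$ gives, conditionally on the population, $\sqrt{n^v_{N^v}}(\hat\gamma_{N^v}-\tilde\gamma_{N^v})=-[\nabla\hat G^v_{N^v}(\bar\gamma)]^{-1}\sqrt{n^v_{N^v}}\,\hat G^v_{N^v}(\tilde\gamma_{N^v})$ for an intermediate $\bar\gamma$; by consistency and \eqref{app_eq:condition_on_sampling_mechanism} the Jacobian converges to $J^v$, and the assumed design-based central limit theorem for $\sqrt{n^v_{N^v}}\,\hat G^v_{N^v}(\tilde\gamma_{N^v})$ makes this term asymptotically $\mathcal{N}\big(0,(J^v)^{-1}(\sum_h\beta_h\Xi^f_h)(J^v)^{-1}\big)$ conditionally. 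Separately, since $G^v_{N^v}(\gamma)$ is a stratified average of i.i.d.\ summands with $N^v_h/N^v\to\beta_h$, a standard stratified CLT together with an expansion of $G^v_{N^v}(\tilde\gamma_{N^v})=0$ about $\gamma_0$ yields $\sqrt{N^v}(\tilde\gamma_{N^v}-\gamma_0)\overset{d}{\to}\mathcal{N}\big(0,(J^v)^{-1}(\sum_h\beta_h\Xi^s_h)(J^v)^{-1}\big)$, and rescaling by $\sqrt{n^v_{N^v}/N^v}\to\sqrt\lambda$ introduces the factor $\lambda$ in front of the $\Xi^s_h$ block.

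The main obstacle is the recombination of these two limits into a single unconditional statement. The design term is Gaussian only \emph{conditionally} on the finite population, and its limiting covariance $\Xi^f_h$ may itself depend on the realized population, while the superpopulation term is a functional of that same population. I would argue, via a conditional characteristic-function (dominated convergence) argument as in \citet{rubin2005two}, that conditioning on the population renders the two contributions asymptotically independent and that averaging the conditional design law over $P^v_h$ replaces $\Xi^f_h$ by $\mbbE_{P^v_h}\Xi^f_h$. Convolving the two independent Gaussians then produces total covariance $(J^v)^{-1}[\sum_h\beta_h(\mbbE_{P^v_h}\Xi^f_h+\lambda\Xi^s_h)](J^v)^{-1}=\Sigma^v$, and a final application of Slutsky's theorem with the convergent Jacobian delivers $(\Sigma^v)^{-1/2}\sqrt{n^v_{N^v}}(\hat\gamma_{N^v}-\gamma_0)\overset{d}{\to}\mathcal{N}(0,I_d)$.
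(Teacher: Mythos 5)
Your proposal is correct and follows essentially the same route as the paper: the paper verifies the five conditions of Theorem 6.1 of \citet{rubin2005two} (the superpopulation CLT via a Lindeberg argument for the bounded score, uniform convergence of the Jacobian, plus the assumed design-based CLT and the assumed limit \eqref{app_eq:condition_on_sampling_mechanism}) and then invokes that theorem, whose internal mechanism is precisely the design/superpopulation decomposition $\hat\gamma_{N^v}-\gamma_0=(\hat\gamma_{N^v}-\tilde\gamma_{N^v})+(\tilde\gamma_{N^v}-\gamma_0)$, the conditional characteristic-function argument, and the convolution of the two asymptotically independent Gaussian limits that you sketch. The only difference is expository---you unpack the two-phase theorem while the paper cites it as a black box after checking its hypotheses---so there is no substantive divergence.
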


 \begin{proof}
    To simplify the notation, we will drop ``$v$'' and ``$N^v$'' from most of
    the subscripts and superscripts.   
    Consistency and asymptotic normality of $\hat\gamma$ for $\gamma_0$ follow from
    Theorem 6.1 of \citet{rubin2005two}, which relies on the following five Assumptions.
    \begin{enumerate}[label=\textbf{C.\arabic*},ref=C.\arabic*]
        \item \label{app_rubin_cond1} $G_{N}(\gamma_0)\overset{p}{\rightarrow}0$ as
        $N\rightarrow\infty$.
        \item \label{app_rubin_cond2} There is a compact neighborhood $\Gamma$ of
        $\gamma_0$ on which with probability one all $G_N(\gamma)$ are
        continuously differentiable and $\nabla_\gamma G_N(\gamma)$
        converge uniformly in $\gamma$ to a nonstochastic limit $J^v(\gamma)$
        that is nonsingular at $\gamma_0$. 
        \item \label{app_rubin_cond3}
        $\sqrt{N}G_N(\gamma_0)\overset{d} {\rightarrow}\mathcal{N}(0,\sum_{h=1}^H\beta_h\Xi^s_h)$
        as $N\rightarrow\infty$.
        \item \label{app_rubin_cond4} Conditionally on the finite population, there is a compact neighborhood $\Gamma$ of
        $\gamma_0$ on which $\nabla_\gamma \hat G_N(\gamma)$ converge uniformly
        in the design probability to limit that is nonstochastic in the design
        probability and coincides with $J^v(\gamma)$ at $\gamma_0$ almost surely.
        \item \label{app_rubin_cond5} Conditionally on the finite population,
        $\sqrt{n}\hat{G}_N(\gamma_N)\overset{d} {\rightarrow}\mathcal{N}(0,\sum_{h=1}^H\beta_h\Xi^f_h)$
        as $n\rightarrow\infty$ where the covariance matrices $\Xi^f_h$ are
        nonstochastic in the superpopulation. 
    \end{enumerate}
    Note that \ref{app_rubin_cond1} is implied by \ref{app_rubin_cond3}. To show that
    \ref{app_rubin_cond3} holds, we can prove that the Lindeberg condition is
    satisfied and then apply the central limit theorem (proposition 2.27 in
    \citet{van2000asymptotic}). For any $\epsilon>0$, 
    \begin{gather}\label{app_eq:rubin_cond1_eq}
        \begin{split}
        \frac{1}{N} \sum_{h=1}^{H} \sum_{i=1}^{N_h} \mbbE_{P_h} [ \norm{(R_{hi} - \pi(Z_{hi};\gamma_0))Z_{hi}}^2 \mathds{1}(\norm{(R_{hi} - \pi(Z_{hi};\gamma_0))Z_{hi}}>\epsilon\sqrt{N})]\\
    < \frac{1}{N}\sum_{h=1}^{H} \sum_{i=1}^{N_h} dM^2 \mathds{1}(\sqrt{d}M>\epsilon\sqrt{N})]
        \end{split}
    \end{gather}
    where we have used the fact that $\norm{Z_{hi}}\leq
    \sqrt{d}\norm{Z_{hi}}_\infty<\sqrt{d}M$ and $|R_{hi} -
    \pi(Z_{hi};\gamma_0)|^2\leq 1$. Then
    $\lim_{N\rightarrow\infty}\mathds{1}(\sqrt{d}M>\epsilon\sqrt{N})=0$, and
    thus the RHS of \eqref{app_eq:rubin_cond1_eq} converges to $0$. 
    In addition, 
    \begin{gather*}
        \lim_{N\rightarrow\infty}\frac{1}{N}  \sum_{i=1}^{H} \sum_{i=1}^{N_h} \text{Var}_{P_h}((R_{hi} - \pi(Z_{hi};\gamma_0))Z_{hi})
        = \lim_{N\rightarrow\infty}\sum_{h=1}^H \frac{N_h}{N} \Xi^s_h = \sum_{h=1}^H \beta_h \Xi^s_h
    \end{gather*}
    where the equality follows from the fact that observations are identically
    distributed within strata, and $\Xi^s_h$ represents the covariance matrix for
    stratum $h$. Condition \ref{app_rubin_cond3} follows from an application of
    the central limit theorem.

    In order to show that \ref{app_rubin_cond2} holds, it suffices to prove that for
    any random vector $\gamma_N\in\Gamma$ converging in probability to
    $\gamma_0$, $\nabla_\gamma
    G_N(\gamma_N)\overset{p}{\rightarrow}J^v(\gamma_0)$ for some
    nonstochastic limit $J:=J(\theta_0)$ (Theorem 1 in \citet{iseki1957theorem}). 
    We can decompose $\nabla_\gamma
    G_N(\gamma_N)$ as follows
    \begin{gather*}
       \frac{1}{N} \sum_{h=1}^{H} \sum_{i=1}^{N_h} \frac{e^{-\gamma_0 Z_{hi}}}{(1+e^{-\gamma_0 Z_{hi}})^2}Z_{hi}Z_{hi}^T
       + \frac{1}{N} \sum_{h=1}^{H} \sum_{i=1}^{N_h}\bigg[\frac{e^{-\gamma_N Z_{hi}}}{(1+e^{-\gamma_N^TZ_{hi}})^2} - \frac{e^{-\gamma_0 Z_{hi}}}{(1+e^{-\gamma_0^TZ_{hi}})^2}  \bigg] Z_{hi}Z_{hi}^T
    \end{gather*}
   where the first term 
   converges in probability to \begin{equation*}
    J^v(\gamma_0) := \sum_{h=1}^{H}\beta_h \mbbE_{P_h}\left[ e^{-\gamma_0^TZ_h}{(1 + e^{-\gamma_0^TZ_h})}^{-2}Z_hZ_h^T \right]. 
   \end{equation*}
   The second term can be rewritten as
    \begin{gather}\label{app_eq:rubin_cond2_decomposition_te}
        \frac{1}{N} \sum_{h=1}^{H} \sum_{i=1}^{N_h}\bigg[\frac{e^{\gamma_0^TZ_{hi}}(1- e^{(\gamma_N - \gamma_0)^TZ_{hi}}) + e^{-\gamma_0^TZ_{hi}}(1 - e^{(\gamma_0 - \gamma_N)^TZ_{hi}})}{(1+e^{-\gamma_N^TZ_{hi}})(1+e^{\gamma_N^TZ_{hi}})(1+e^{-\gamma_0^TZ_{hi}})(1+e^{\gamma_0^TZ_{hi}})}\bigg]
    \end{gather} %
    which can be upper bounded by $ e^{\norm{\gamma_0}\sqrt{d_z}M}(e^{\sqrt{d_z}M\norm{\gamma_N - \gamma_0}} -1 )$. 
    By the continuous mapping theorem this bound is $o_p(1)$.
    It follows that that \ref{app_rubin_cond2} is verified. 

    Condition \ref{app_rubin_cond5} follows from the Assumptions. For
    a discussions of the specific conditions needed under various sampling
    designs, see Section 3.5 of \citet{thompson1997theory}. Similarly, condition
    \ref{app_rubin_cond4} follows from \eqref{app_eq:condition_on_sampling_mechanism}.

    The result then follows from Theorem 6.1 of \citet{rubin2005two}. 

\end{proof}

\subsection{Estimation on NIBRS}\label{app:proofs_nibrs}

Throughout the proofs, we will use the following lemma. 

\begin{lemma}\label{app_lemma:change_measure} Let
    $f:\mathcal{X}\mapsto\mathbb{R}$. Assume that
    \ref{ass_correctpi}--\ref{ass_matchingpi} hold. Then
    \begin{equation*}
        \mbbE[f(X)] = \mbbE\left[ \frac{f(X)}{\pi(Z;\gamma_0)}\bigg\vert R=1\right]\pi^*.
    \end{equation*}
\end{lemma}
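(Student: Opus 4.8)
The plan is to establish this change-of-measure identity directly from the definition of conditional expectation, the tower property, and the reporting-probability identification that the assumptions supply. The first fact I would record is that Assumptions \ref{ass_correctpi} and \ref{ass_matchingpi} together yield $\pi(z;\gamma_0) = \mbbP(R=1\mid Z=z)$ for all $z\in\mathcal{Z}$ (exactly the implication noted in the text immediately after the assumptions are stated), and that $\pi^* = \mbbE[R] = \mbbP(R=1)$ by definition. Positivity of $\pi(Z;\gamma_0)$, which follows from strict positivity of the logistic link in \ref{ass_correctpi}, guarantees that the ratio $f(X)/\pi(Z;\gamma_0)$ is well defined.

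Next I would rewrite the right-hand side as an unconditional expectation. Since $R\in\{0,1\}$, for any integrable $Y$ one has $\mbbE[Y\mid R=1]\,\mbbP(R=1) = \mbbE[Y\,\mathds{1}(R=1)] = \mbbE[YR]$. Applying this with $Y = f(X)/\pi(Z;\gamma_0)$ converts the claimed right-hand side into
\begin{equation*}
    \mbbE\left[\frac{f(X)}{\pi(Z;\gamma_0)}\,R\right],
\end{equation*}
so the task reduces to showing that this equals $\mbbE[f(X)]$.

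The key step is an application of the tower property, conditioning on $(X,Z)$. Because $f(X)$ and $\pi(Z;\gamma_0)$ are measurable with respect to $(X,Z)$, they factor out of the inner expectation, leaving $\mbbE[R\mid X,Z]$ as the only stochastic factor. Here Assumption \ref{ass_pioverz} does the essential work: it gives $\mbbE[R\mid X,Z] = \mbbP(R=1\mid X,Z) = \mbbP(R=1\mid Z) = \pi(Z;\gamma_0)$, where the final equality is the identification recorded above. This factor cancels the denominator exactly, so the inner conditional expectation collapses to $f(X)$, and a final outer expectation returns $\mbbE[f(X)]$.

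I do not anticipate a genuine obstacle: this is a clean inverse-probability-weighting identity. The only point requiring care is bookkeeping in the conditioning — one must condition on the full pair $(X,Z)$ rather than on $Z$ alone, so that Assumption \ref{ass_pioverz} can be invoked to strip out the spurious dependence of $R$ on $X$ before the cancellation. Beyond that, one should note that integrability of the weighted quantity is ensured by the positivity lower bound on $\pi(Z;\gamma_0)$ (made uniform via \ref{ass_boundedcov} in the paper's broader development), so all expectations above are finite and the manipulations are justified.
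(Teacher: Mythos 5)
Your proof is correct and follows essentially the same route as the paper's: both reduce the conditional expectation to $\mbbE[f(X)R/\pi(Z;\gamma_0)]$, invoke the tower property over $(X,Z)$ with Assumption \ref{ass_pioverz} to replace $\mbbE[R\mid X,Z]$ by $\mbbP(R=1\mid Z)$, and use \ref{ass_correctpi} together with \ref{ass_matchingpi} to identify this with $\pi(Z;\gamma_0)$; the only difference is that you start from the right-hand side while the paper starts from $\mbbE[f(X)]$. Your added remarks on positivity and integrability are a sound bit of extra care, not a deviation.
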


\begin{proof}
    We can show that
    \begin{equation*}
        \mbbE[f(X)] = \mbbE\left[f(X) \frac{R}{\mbbP(R=1|Z,X)} \right] = \mbbE\left[f(X) \frac{R}{\mbbP(R=1|Z)} \right]
    \end{equation*}
    where the first equality follows from the law of iterated expectations, while the second follows from \ref{ass_pioverz}.
    Now, thanks to \ref{ass_matchingpi} and \ref{ass_correctpi} we obtain 
    that 
    \begin{equation*}
                \mbbE\left[f(X) \frac{R}{\mbbP(R=1|Z)} \right] = \mbbE\left[f(X) \frac{R}{\pi(Z;\gamma_0)} \right].
    \end{equation*}
    The result follows. 
\end{proof}

Then we can derive the asymptotic properties of the estimator $\hat N$. 

\begin{proposition}\label{app_prop:asymptotic_N} Consider the conditions of
    Proposition \ref{app_prop:wlog_asymptotics} to be satisfied, and Assumptions
    \ref{ass_correctpi}--\ref{ass_boundedcov} to hold. Then 
    \begin{equation}\label{app_eq:asymptotic_N}
        V_N^{-1/2}\sqrt{n}(\hat N/N - 1)\overset{d} {\rightarrow}\mathcal{N}(0,1)
    \end{equation}
    as $N\rightarrow\infty$ where
    \begin{gather}\label{app_eq:var_N}
        V_N := (\pi^*)^2\left[ \mbbE\left[ \frac{1-\pi(Z;\gamma_0)}{\pi(Z;\gamma_0)^2}\bigg|R=1 \right] + 
        \kappa  W^T\Sigma^v W \right] .
    \end{gather}
    with $W:=\mbbE[e^{-Z^T\gamma_0}\pi(Z;\gamma_0)^{-1}Z|R=1]$.%
\end{proposition}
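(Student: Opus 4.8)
The plan is to linearize $\hat N/N-1$ about the true reporting parameter $\gamma_0$ and to separate the fluctuations coming from the NIBRS sample (with $\gamma_0$ plugged in) from those coming from the estimated $\hat\gamma$, which depends only on the independent NCVS sample. Writing $\hat N/N - 1 = N^{-1}\sum_{i=1}^N\big(R_i/\pi(Z_i;\hat\gamma)-1\big)$, the starting observation is that $\mbbE[R/\pi(Z;\gamma_0)]=1$ (this is Lemma \ref{app_lemma:change_measure} with $f\equiv 1$, which also gives $\mbbE[1/\pi(Z;\gamma_0)\mid R=1]=1/\pi^*$), so substituting the truth makes the leading term mean-zero. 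Since $1/\pi(z;\gamma)=1+e^{-\gamma^\top z}$, a Taylor expansion in $\gamma$ about $\gamma_0$ uses $\nabla_\gamma\{1/\pi(z;\gamma)\}=-z\,e^{-\gamma^\top z}$ and produces the decomposition $\hat N/N-1 = T_1 + T_2 + \mathrm{Rem}$, where $T_1 := N^{-1}\sum_{i=1}^N\big(R_i/\pi(Z_i;\gamma_0)-1\big)$ is the NIBRS fluctuation, $T_2 := -\big[N^{-1}\sum_{i=1}^N R_i Z_i e^{-\gamma_0^\top Z_i}\big]^\top(\hat\gamma-\gamma_0)$ is the first-order estimation term, and $\mathrm{Rem}$ collects the quadratic remainder.

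For $T_1$, the summands are i.i.d.\ with mean zero, so $\sqrt{N}\,T_1\overset{d}{\rightarrow}\mathcal{N}(0,\var{R/\pi(Z;\gamma_0)})$ by the central limit theorem. Because $n/N\to\pi^*$, multiplying by $\sqrt{n/N}\to\sqrt{\pi^*}$ gives $\sqrt{n}\,T_1\overset{d}{\rightarrow}\mathcal{N}(0,\pi^*\var{R/\pi(Z;\gamma_0)})$. The variance is evaluated using $R^2=R$ together with Lemma \ref{app_lemma:change_measure}: one finds $\mbbE[(R/\pi(Z;\gamma_0))^2]=\mbbE[1/\pi(Z;\gamma_0)^2\mid R=1]\pi^*$, and subtracting the squared mean and recalling $\mbbE[1/\pi\mid R=1]=1/\pi^*$ collapses this to $\var{R/\pi(Z;\gamma_0)}=\pi^*\,\mbbE[(1-\pi(Z;\gamma_0))/\pi(Z;\gamma_0)^2\mid R=1]$. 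This reproduces the first term $(\pi^*)^2\,\mbbE[(1-\pi)/\pi^2\mid R=1]$ of $V_N$.

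For $T_2$, the law of large numbers gives $N^{-1}\sum_{i=1}^N R_i Z_i e^{-\gamma_0^\top Z_i}\overset{p}{\rightarrow}\mbbE[R Z e^{-\gamma_0^\top Z}]$, which Lemma \ref{app_lemma:change_measure} rewrites as a conditional-on-$\{R=1\}$ expectation, identifying this limiting Jacobian with $\pi^* W$ for $W$ as defined in the statement (the sign is immaterial for the variance). By Proposition \ref{app_prop:wlog_asymptotics}, $\sqrt{n^v}(\hat\gamma-\gamma_0)\overset{d}{\rightarrow}\mathcal{N}(0,\Sigma^v)$, so with $n/n^v\to\kappa$ we have $\sqrt{n}(\hat\gamma-\gamma_0)\overset{d}{\rightarrow}\mathcal{N}(0,\kappa\Sigma^v)$; combining by Slutsky yields $\sqrt{n}\,T_2\overset{d}{\rightarrow}\mathcal{N}(0,(\pi^*)^2\kappa\,W^\top\Sigma^v W)$, the second term of $V_N$.

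Finally, because the NIBRS sample (driving $T_1$) and the NCVS sample (driving $\hat\gamma$, hence $T_2$) are independent, $\sqrt{n}\,T_1$ and $\sqrt{n}\,T_2$ are asymptotically independent, so their limiting variances add to $V_N=(\pi^*)^2\big[\mbbE[(1-\pi)/\pi^2\mid R=1]+\kappa\,W^\top\Sigma^v W\big]$; positivity of $V_N$ follows from \ref{ass_correctpi} and \ref{ass_boundedcov}, which bound $\pi$ away from $0$. Standardizing by $V_N^{-1/2}$ then gives \eqref{app_eq:asymptotic_N}. The remainder is controlled by bounding the Hessian of $\gamma\mapsto 1/\pi(z;\gamma)$ uniformly over the compact $\Gamma$ using \ref{ass_boundedcov}, so that $\mathrm{Rem}=O_p(\norm{\hat\gamma-\gamma_0}^2)=O_p(1/n^v)$ and $\sqrt{n}\,\mathrm{Rem}=O_p(1/\sqrt{n^v})\to 0$. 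I expect the main obstacle to be making the two-sample argument rigorous: establishing a joint central limit theorem for the independent NIBRS and NCVS fluctuations at the common $\sqrt{n}$ scale (correctly tracking the rate constant $\kappa$) and justifying the linearization uniformly, after which the variance bookkeeping—reducing every unconditional moment to a conditional-on-$\{R=1\}$ expectation via Lemma \ref{app_lemma:change_measure}—is routine.
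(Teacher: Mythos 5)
Your proof is correct and follows essentially the same route as the paper's: the identical first-order Taylor expansion of $R_i/\pi(Z_i;\hat\gamma)$ about $\gamma_0$, a CLT for the mean-zero NIBRS term $T_1$, the law of large numbers plus Slutsky (with rate constant $\kappa$) for the $\hat\gamma$-driven term, asymptotic independence of the two fluctuations because they arise from separate samples, a remainder bound $O_p(\sqrt{n^v}\norm{\hat\gamma-\gamma_0}^2)=O_p(1/\sqrt{n^v})$ via \ref{ass_boundedcov} and Cauchy--Schwarz, and the same change-of-measure bookkeeping through Lemma \ref{app_lemma:change_measure}. You even reproduce the paper's exact identification of the limiting Jacobian $\mbbE[RZe^{-\gamma_0^T Z}]$ with $\pi^* W$, so nothing in your argument departs from the published proof.
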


\begin{proof}
    Consider the following first-order Taylor expansion of $\hat N/N-1$
    \begin{multline*}
            \frac{1}{N}\sum_{i=1}^N \left( \frac{R_i}{\pi(Z_i;\gamma_0)} -1\right) \\-  
            (\hat\gamma-\gamma_0)^T\frac{1}{N}\sum_{i=1}^N R_i e^{-Z_i^T\gamma_0} Z_i  
            + (\hat\gamma-\gamma_0)^T\frac{1}{N}\sum_{i=1}^N R_i e^{-Z_i^T\tilde\gamma} Z_i Z_i^T (\hat\gamma-\gamma_0)
    \end{multline*}
    where $\tilde\gamma$ is a vector between $\hat\gamma$ and $\gamma_0$. Using
    the Cauchy-Schwarz inequality together with \ref{ass_boundedcov},
    we obtain $((\tilde\gamma - \gamma_0)^T Z_i)^2\leq \norm{Z_i}^2 \norm{\hat\gamma -
    \gamma_0}^2< \sqrt{d_z} M \norm{\hat\gamma - \gamma_0}^2$. Thus, we can rewrite
    $\sqrt{n}(\hat N/N-1)$ as
    \begin{multline}\label{app_eq:taylor_hatn}
        \frac{\sqrt{n}}{\sqrt{N}}\frac{1}{\sqrt{N}}\sum_{i=1}^N \left( \frac{R_i}{\pi(Z_i;\gamma_0)} -1\right)\\ -  
        \sqrt{n^v}(\hat\gamma-\gamma_0)^T\sqrt{\kappa}\frac{1}{N}\sum_{i=1}^N Z_i R_i e^{-Z_i^T\gamma_0} 
        + \sqrt{\kappa} O_p(\sqrt{n^v}\norm{\hat\gamma-\gamma_0}^2).
    \end{multline}
    The last term in \eqref{app_eq:taylor_hatn} can be rewritten as
    $O_p(\norm{\sqrt{n^v}(\hat\gamma-\gamma_0)}^2/\sqrt{n^v}) =
    O_p(1/\sqrt{n^v})$ thanks to Proposition \ref{app_prop:wlog_asymptotics}.
    The first term is a sum of \iid random variables that are bounded by
    \ref{ass_boundedcov} and thus it is asymptotically normal with mean $0$
    thanks to the central limit theorem. By \ref{ass_boundedcov},
    $N^{-1}\sum_{i=1}^N R_i Z_i e^{-Z_i^T\gamma_0}$ in the second term is an
    average of \iid bounded random variables which converges in probability to
    $\mbbE[RZe^{-Z^T\gamma_0}] + O_p(1/\sqrt{N})$. We can then rewrite this
    expectation as $\mbbE[Ze^{-Z^T\gamma_0}\pi(Z;\gamma_0)^{-1}|R=1]\pi^*$
    thanks to Lemma \ref{app_lemma:change_measure}. Then
    $\sqrt{n}(\hat\gamma-\gamma_0)$ is asymptotically normal by Proposition
    \ref{app_prop:wlog_asymptotics} and consequently the second term in
    \eqref{app_eq:taylor_hatn} is asymptotically normal by Slutsky. Note that
    the first two terms in \eqref{app_eq:taylor_hatn} are asymptotically
    independent because they arise from different samples, and thus we have
    proved \eqref{app_eq:asymptotic_N}.
    The variance in
    \eqref{app_eq:var_N} follows by an application of Lemma
    \ref{app_lemma:change_measure}.  
\end{proof}

\begin{proposition}\label{app_prop:asymptotic_pi} Consider the conditions of
    Proposition \ref{app_prop:wlog_asymptotics} to be satisfied, and Assumptions
    \ref{ass_correctpi}--\ref{ass_boundedcov} to hold. Then 
    \begin{equation*}
    V_{\pi^*}^{-1/2}\sqrt{n}(\hat{\pi}^*-\pi^*)\overset{d} {\rightarrow}\mathcal{N}(0,1)
    \end{equation*}
    as $N\rightarrow\infty$ where 
    \begin{gather*}
        V_{\pi^*} := (\pi^*)^2 \left[ \mbbE \left[ \frac{\pi^* - \pi(Z;\gamma_0)}{\pi(Z;\gamma_0)^2} \bigg\vert R=1 \right] + 
        \kappa (\pi^*)^2 W^T \Sigma^v W \right] .  
    \end{gather*}
    with $W:=\mbbE[e^{-Z^T\gamma_0}\pi(Z;\gamma_0)^{-1}Z|R=1]$
\end{proposition}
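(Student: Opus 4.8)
The plan is to treat $\hat\pi^*=(\sum_i R_i)/\hat N$ as a smooth function of the two sample averages $\bar R:=N^{-1}\sum_{i=1}^N R_i$ and $\hat N/N$, and to recycle almost verbatim the expansion already obtained for Proposition~\ref{app_prop:asymptotic_N}. First I would write
\[
\hat\pi^*-\pi^*=\frac{N}{\hat N}\left[(\bar R-\pi^*)-\pi^*\left(\frac{\hat N}{N}-1\right)\right].
\]
Since $\hat N/N\overset{p}{\rightarrow}1$ (the consistency of the Horvitz--Thompson estimator contained in Proposition~\ref{app_prop:asymptotic_N}), the factor $N/\hat N$ converges to $1$, so by Slutsky it may be replaced by $1$ at the cost of an $o_p(1)$ term once everything is multiplied by $\sqrt n$. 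This reduces the statement to the joint behaviour of $\sqrt n(\bar R-\pi^*)$ and $\sqrt n(\hat N/N-1)$.

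Next I would insert the first-order Taylor expansion of $\hat N/N-1$ from \eqref{app_eq:taylor_hatn} and pair it with the trivial expansion $\sqrt n(\bar R-\pi^*)=\sqrt{n/N}\,N^{-1/2}\sum_{i=1}^N(R_i-\pi^*)$. The useful algebraic fact is that the two NIBRS-sample averages collapse into a single \iid sum, because
\[
(R_i-\pi^*)-\pi^*\left(\frac{R_i}{\pi(Z_i;\gamma_0)}-1\right)=R_i\left(1-\frac{\pi^*}{\pi(Z_i;\gamma_0)}\right).
\]
What survives besides this sum is the estimation term $\pi^*\sqrt\kappa\,\sqrt{n^v}(\hat\gamma-\gamma_0)^T N^{-1}\sum_{i=1}^N R_iZ_ie^{-Z_i^T\gamma_0}$, whose leading $\pi^*$ is exactly the new feature relative to Proposition~\ref{app_prop:asymptotic_N}; the second-order Taylor remainder is $O_p(1/\sqrt{n^v})$ by the same boundedness argument (\ref{ass_boundedcov}) and Proposition~\ref{app_prop:wlog_asymptotics} used there.

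I would then establish the two limits separately. The merged summand has mean zero by the identity $\mbbE[R/\pi(Z;\gamma_0)]=1$ and is bounded by \ref{ass_boundedcov}, so the central limit theorem applies to the NIBRS term; the estimation term is asymptotically normal by the asymptotic normality of $\hat\gamma$ (Proposition~\ref{app_prop:wlog_asymptotics}) together with $N^{-1}\sum_i R_iZ_ie^{-Z_i^T\gamma_0}\overset{p}{\rightarrow}\pi^* W$, which was derived via Lemma~\ref{app_lemma:change_measure} in Proposition~\ref{app_prop:asymptotic_N}. Because the NCVS and NIBRS samples are independent, these two contributions are asymptotically independent and their variances add. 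The NIBRS variance would be computed from $\mbbE\!\left[R(1-\pi^*/\pi(Z;\gamma_0))^2\right]$ and rewritten as a conditional-on-$\{R=1\}$ expectation using Lemma~\ref{app_lemma:change_measure} and the elementary identities $\mbbE[1/\pi(Z;\gamma_0)\mid R=1]=1/\pi^*$ and $\mbbE[1/\pi(Z;\gamma_0)^2\mid R=1]=\mbbE[1/\pi(Z;\gamma_0)]/\pi^*$, producing the term built from $\mbbE[(\pi^*-\pi(Z;\gamma_0))/\pi(Z;\gamma_0)^2\mid R=1]$; the estimation variance reduces to $(\pi^*)^4\kappa\,W^T\Sigma^v W$, matching the second term of $V_{\pi^*}$ after the common factor $(\pi^*)^2$ is collected.

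I expect the difficulty to be bookkeeping rather than conceptual. The three places that demand care are: (i) checking that the random denominator $\hat N$ enters only at lower order in the ratio, so that replacing $N/\hat N$ by $1$ is legitimate; (ii) the cancellation that collapses the two NIBRS averages into the single influence function $R_i(1-\pi^*/\pi(Z_i;\gamma_0))$; and (iii) keeping the powers of $\pi^*$ consistent through the change-of-measure conversions, since it is easy to misplace a factor of $\pi^*$ when moving between unconditional expectations and expectations conditional on $\{R=1\}$. The structural fact that makes the variance simply additive---and hence the whole computation tractable---is the independence of the NCVS and NIBRS samples.
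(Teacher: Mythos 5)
Your proposal is correct and takes essentially the same route as the paper's proof: your factored identity $\hat\pi^*-\pi^*=(N/\hat N)\left[(\bar R-\pi^*)-\pi^*\left(\hat N/N-1\right)\right]$ is exactly the paper's four-term expansion \eqref{app_eq:pi_expansion} in collapsed form, and the subsequent steps---discarding the $N/\hat N$ factor via Proposition \ref{app_prop:asymptotic_N} and Slutsky, inserting the Taylor expansion \eqref{app_eq:taylor_hatn}, merging the two NIBRS averages into the single influence term $R_i(1-\pi^*/\pi(Z_i;\gamma_0))$, and adding variances by asymptotic independence of the NCVS and NIBRS samples---mirror the paper's argument step for step. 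The only difference is presentational: you spell out the change-of-measure variance bookkeeping (via Lemma \ref{app_lemma:change_measure}) that the paper leaves largely implicit, and your resulting expression agrees with the stated $V_{\pi^*}$.
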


\begin{proof}
In order to show asymptotic normality, we can first rewrite $\hat{\pi}^* - \pi^*$ as
\begin{multline}\label{app_eq:pi_expansion}
         \frac{\sum_{i=1}^N R_i}{N} - \pi^* + \pi^* \left(1 - \frac{\hat N}{N} \right) +  \left( \frac{N}{\hat N} -1 \right) \left( \frac{\sum_{i=1}^N R_i}{N} - \pi^*\right)\\ + \pi^* \left( \frac{N}{\hat N} -1 \right)\left( 1 - \frac{\hat N}{N}\right)
\end{multline}
The third term in \eqref{app_eq:pi_expansion} is
$O_p\left(\frac{1}{\sqrt{n}}\max \left\{\frac{1}{\sqrt{n}},
\frac{1}{\sqrt{n^v}}\right\}\right)$ thanks to Proposition
\ref{app_prop:asymptotic_N}, the weak law of law of large numbers, and Slutsky.
The last term is $O_p\left(\max \left\{\frac{1}{n},
\frac{1}{n^v}\right\}\right)$ by Proposition \ref{app_prop:asymptotic_N}.
Thus, $\sqrt{n}(\hat{\pi}^* - \pi^*)$ is equal to
\begin{gather}\label{app_eq:pi_exp}
     \sqrt{n}\left( \frac{\sum_{i=1}^N R_i}{N} - \pi^*\right) + \sqrt{n}\pi^* \left(1 -  \frac{\hat N}{N} \right) 
    + O_p\left(\max \left\{\frac{1}{\sqrt{n}},\frac{\sqrt{n}}{n^v}\right\}\right).
\end{gather}
We can plug in the expansion of $\sqrt{n}(1 - \hat N / N)$ in
\eqref{app_eq:taylor_hatn} to rewrite \eqref{app_eq:pi_exp} as
\begin{multline}
    \frac{\sqrt{n}}{N} \sum_{i=1}^N R_i\left(1 - \frac{\pi^*}{\pi(Z_i;\gamma_0)}\right) + \pi^* \sqrt{n^v}(\hat\gamma - \gamma_0)\sqrt{\kappa}\frac{1}{N}\sum_{i=1}^N R_i e^{-Z_i^T\gamma_0}Z_i \\
    + O_p\left(\max \left\{\frac{1}{\sqrt{n}},\frac{\sqrt{n}}{n^v}\right\}\right) + \sqrt{\kappa} O_p(\sqrt{n^v}\norm{\hat\gamma-\gamma_0}^2)
\end{multline}
Note
that the first term is a sum of \iid random variables bounded by Assumption
\ref{ass_boundedcov} and thus converges in distribution to $\mathcal{N}(0,
(\pi^*)^2(\mbbE[\pi(Z;\gamma_0)^{-1}]\pi^*-1))$. 
Then asymptotic normality of $\sqrt{n}(1 - \hat N / N)$ follows from analogous
arguments as those in the proof of Proposition \ref{app_prop:asymptotic_N}.
\end{proof}

\begin{proposition}\label{app_prop:asymptotic_q} Consider the conditions of
    Proposition \ref{app_prop:wlog_asymptotics} to be satisfied, and Assumptions
    \ref{ass_correctpi}--\ref{ass_boundedcov} to hold. Then 
    \begin{equation*}
    V_{q^*}^{-1/2}\sqrt{n}(\hat{q}^*-q^*)\overset{d} {\rightarrow}\mathcal{N}(0,1)
    \end{equation*}
    as $N\rightarrow\infty$ where 
    \begin{gather*}
        V_{q^*} =  \pi^* q^* \left[ \pi^*  \mbbE\left[ \frac{q^* - \alpha^*\pi(Z;\gamma_0)}{\pi(Z;\gamma_0)^2} \bigg\vert R=1 \right] 
        + 1 - \alpha^* + \pi^* \kappa W^T \Sigma^v W \right]
    \end{gather*}
    with $W:=\mbbE[e^{-Z^T\gamma_0}\pi(Z;\gamma_0)^{-1}Z|R=1]$, and
    $\alpha^*:=\mbbE[A|R=1]$.
\end{proposition}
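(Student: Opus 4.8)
The plan is to mirror the Taylor-expansion arguments already used for $\hat N$ (Proposition~\ref{app_prop:asymptotic_N}) and $\hat\pi^*$ (Proposition~\ref{app_prop:asymptotic_pi}). The convenient starting point is the algebraic identity $\sum_{i=1}^N A_i - q^*\hat N = \sum_{i=1}^N\left(A_i - q^* R_i/\pi(Z_i;\hat\gamma)\right)$, which holds because $\hat N = \sum_{i=1}^N R_i/\pi(Z_i;\hat\gamma)$. Dividing by $\hat N$ gives
\[
\hat q^* - q^* = \frac{N}{\hat N}\cdot\frac{1}{N}\sum_{i=1}^N\left(A_i - \frac{q^* R_i}{\pi(Z_i;\hat\gamma)}\right).
\]
Since $\hat N/N\overset{p}{\rightarrow}1$ by Proposition~\ref{app_prop:asymptotic_N}, the prefactor $N/\hat N$ contributes only a $1+o_p(1)$, so it suffices to analyze the scaled average on the right.

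First I would Taylor expand $1/\pi(Z_i;\hat\gamma)$ about $\gamma_0$, using $\nabla_\gamma\,\pi(Z;\gamma)^{-1}=-Z e^{-\gamma^T Z}$, exactly as in the expansion \eqref{app_eq:taylor_hatn}. This yields
\begin{multline*}
\sqrt{n}(\hat q^* - q^*) = \frac{\sqrt{n}}{\sqrt{N}}\frac{1}{\sqrt{N}}\sum_{i=1}^N\left(A_i - \frac{q^* R_i}{\pi(Z_i;\gamma_0)}\right) \\
+ q^*\sqrt{\kappa}\,\sqrt{n^v}(\hat\gamma-\gamma_0)^T\frac{1}{N}\sum_{i=1}^N R_i Z_i e^{-Z_i^T\gamma_0} + o_p(1),
\end{multline*}
where the collected lower-order terms are $o_p(1)$, controlled by Assumption~\ref{ass_boundedcov}, Cauchy--Schwarz, and Proposition~\ref{app_prop:wlog_asymptotics} as in the proof of Proposition~\ref{app_prop:asymptotic_N}. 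The first (NIBRS) term is $\sqrt{n}/\sqrt{N}$ times a normalized sum of \iid bounded summands whose mean is zero, since $\mbbE[A]=q^*$ and $\mbbE[R/\pi(Z;\gamma_0)]=1$; the central limit theorem together with $\sqrt{n}/\sqrt{N}\to\sqrt{\pi^*}$ then gives asymptotic normality. In the second (NCVS) term, $N^{-1}\sum_i R_i Z_i e^{-Z_i^T\gamma_0}\overset{p}{\rightarrow}\pi^* W$ with $W=\mbbE[e^{-Z^T\gamma_0}\pi(Z;\gamma_0)^{-1}Z\mid R=1]$ by the weak law of large numbers and Lemma~\ref{app_lemma:change_measure}, while $\sqrt{n^v}(\hat\gamma-\gamma_0)$ is asymptotically normal by Proposition~\ref{app_prop:wlog_asymptotics}, so Slutsky's theorem yields asymptotic normality of this term. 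Because the two terms are built from the independent NIBRS and NCVS samples, they are asymptotically independent, and the limiting variance is the sum of the two contributions.

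I expect the explicit variance bookkeeping needed to match the stated $V_{q^*}$ to be the main obstacle. For the NIBRS term I would evaluate $\mathrm{Var}(A - q^* R/\pi(Z;\gamma_0))$ using the indicator identities $A^2=A$, $AR=A$, and $R^2=R$, together with $\mbbE[R/\pi(Z;\gamma_0)]=1$ (equivalently $\mbbE[\pi(Z;\gamma_0)^{-1}\mid R=1]=1/\pi^*$) and the change-of-measure Lemma~\ref{app_lemma:change_measure}; the relation $q^*=\alpha^*\pi^*$, which follows from $\alpha^*=\mbbE[A\mid R=1]=\mbbE[A]/\pi^*$, is what converts the raw second moments into the combination $\mbbE[(q^*-\alpha^*\pi(Z;\gamma_0))/\pi(Z;\gamma_0)^2\mid R=1]+1-\alpha^*$ that appears inside $V_{q^*}$. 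The delicate step is the cross term between $A$ and $R/\pi(Z;\gamma_0)$, which couples the arrest outcome to the reporting covariates $Z$ conditionally on $R=1$ and must be handled carefully, since this is where the factor $\alpha^*$ enters. The NCVS contribution, proportional to $W^T\Sigma^v W$, then follows directly from the limiting covariance of $\hat\gamma$ supplied by Proposition~\ref{app_prop:wlog_asymptotics}, and collecting the two independent pieces produces the stated form $V_{q^*}$.
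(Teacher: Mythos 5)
Your overall route is exactly the paper's: the paper offers no separate argument for this proposition, stating only that it follows from the same set of arguments as the proofs of Propositions \ref{app_prop:asymptotic_N} and \ref{app_prop:asymptotic_pi}, and your reconstruction --- linearize $1/\pi(Z_i;\hat\gamma)$ around $\gamma_0$, apply the central limit theorem to the \iid NIBRS term, apply Proposition \ref{app_prop:wlog_asymptotics} plus Slutsky to the NCVS term, and invoke asymptotic independence of the two samples --- is precisely that argument. Your starting identity $\hat q^* - q^* = (N/\hat N)\cdot N^{-1}\sum_i\bigl(A_i - q^*R_i/\pi(Z_i;\hat\gamma)\bigr)$ is in fact a tidier entry point than the four-term expansion \eqref{app_eq:pi_expansion} used for $\hat\pi^*$, since the prefactor $N/\hat N = 1 + o_p(1)$ is disposed of by Slutsky in one stroke, and the remainder control via \ref{ass_boundedcov}, Cauchy--Schwarz, and Proposition \ref{app_prop:wlog_asymptotics} is the correct transplant of the argument around \eqref{app_eq:taylor_hatn}.

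The one place your proposal stops short is the step you yourself flag as delicate, and it genuinely is. Using $A^2 = A$, $AR = A$, and $R^2 = R$, the linearized NIBRS summand has
\begin{equation*}
\var{A - \frac{q^*R}{\pi(Z;\gamma_0)}} = q^* - 2q^*\,\mbbE\left[\frac{A}{\pi(Z;\gamma_0)}\right] + (q^*)^2\,\pi^*\,\mbbE\left[\frac{1}{\pi(Z;\gamma_0)^2}\,\Big\vert\, R=1\right],
\end{equation*}
and by iterated expectations $\mbbE[A/\pi(Z;\gamma_0)] = \mbbE\left[\mbbP(A=1\mid Z, R=1)\right]$, i.e.\ the conditional arrest rate averaged under the \emph{marginal} law of $Z$. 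Reproducing the displayed $V_{q^*}$ requires this quantity to equal $\alpha^* = \mbbE[A\mid R=1]$, which is the same conditional arrest rate averaged under the law of $Z\mid R=1$; the two coincide only when $\mbbP(A=1\mid Z,R=1)$ is uncorrelated with $\pi(Z;\gamma_0)$, and that does not follow from \ref{ass_correctpi}--\ref{ass_boundedcov}. So the identity $q^* = \alpha^*\pi^*$ alone will not ``convert the raw second moments'' into the stated combination as your plan anticipates: you must either add and justify the factorization $\mbbE[A/\pi(Z;\gamma_0)\mid R=1] = \alpha^*\,\mbbE[\pi(Z;\gamma_0)^{-1}\mid R=1]$, or carry the cross moment through unreduced, in which case your variance will not literally match the proposition's $V_{q^*}$. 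Since the paper omits its own computation, this is exactly the piece a complete proof must pin down; everything else in your proposal is sound and matches the intended argument.
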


\begin{proof}
This proof follows from the same set of arguments as the proof of Propositions
\ref{app_prop:asymptotic_N} and \ref{app_prop:asymptotic_pi}, hence it is
omitted. 
\end{proof}

\begin{proposition}\label{app_prop:q_asymptotics_logreg} %
    Consider the  conditions of Proposition
    \ref{app_prop:wlog_asymptotics} and Assumptions
    \ref{ass_correctpi}--\ref{ass_boundedcov} to hold. Let
    $\theta_0\in\text{Int}(\Theta)$ be defined by the moment condition
    \eqref{eq:q_reg} and $\hat\theta\in\Theta$ be the estimator that solves the
    estimating equation \eqref{eq:theta_solve}. 
    Then 
    \begin{equation*}
       \Sigma^{-1/2} \sqrt{n}(\hat\theta -
    \theta_0)\overset{d} {\rightarrow}\mathcal{N}(0,I_d)
    \end{equation*}
    as $n\rightarrow\infty$ with $\Sigma:=J_\theta^{-1}\Xi J_\theta^{-1}$ and 
    \begin{gather*}
        \Xi := \mbbE\left[ h(A, Z, X;\theta_0, \gamma_0) h(A, Z, X;\theta_0, \gamma_0)^T |R=1\right] 
       +   \kappa J_\gamma \Sigma^v J_\gamma^{T},\\
      J_\theta:= \nabla_\theta G(\theta_0, \gamma_0) = \mbbE[q(X;\theta_0)(1-q(X;\theta_0))\pi(Z;\gamma_0)^{-1}XX^T|R=1],\\
        J_\gamma:= \nabla_\gamma G(\theta_0, \gamma_0) = \mbbE[q(X;\theta_0)e^{-\gamma_0^TZ} XZ^T|R=1].
    \end{gather*}
\end{proposition}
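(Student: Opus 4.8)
The plan is to treat this as a two-step M-estimation (GMM) problem in the spirit of \citet{newey1994large}: the first-step estimator $\hat\gamma$ is computed from the NCVS sample and then plugged into the second-step estimating equation $\hat{G}_\mathcal{N}(\theta, \hat\gamma) = 0$ that defines $\hat\theta$. First I would establish consistency of $\hat\theta$ for $\theta_0$. By \ref{ass_boundedcov} and positivity, the summands $R_i h(A_i, Z_i, X_i; \theta, \gamma)$ are uniformly bounded and continuous in $(\theta,\gamma)$ on the compact set $\Theta\times\Gamma$ (since $\pi(Z;\gamma)^{-1}$ is bounded there), so a uniform law of large numbers gives $\hat{G}_\mathcal{N}(\theta,\gamma)\rightarrow G(\theta,\gamma)$ uniformly. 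Combined with the consistency of $\hat\gamma$ from Proposition \ref{app_prop:wlog_asymptotics} and the identification of $\theta_0$ as the unique zero of $G(\cdot,\gamma_0)$, a standard M-estimation argument yields $\hat\theta\overset{p}{\rightarrow}\theta_0$.

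The core of the argument is a joint first-order Taylor expansion of $\hat{G}_\mathcal{N}(\hat\theta, \hat\gamma) = 0$ around $(\theta_0,\gamma_0)$:
\begin{equation*}
0 = \hat{G}_\mathcal{N}(\theta_0, \gamma_0) + \nabla_\theta \hat{G}_\mathcal{N}(\theta_0, \gamma_0)(\hat\theta - \theta_0) + \nabla_\gamma \hat{G}_\mathcal{N}(\theta_0, \gamma_0)(\hat\gamma - \gamma_0) + r_N,
\end{equation*}
where $r_N$ collects the higher-order remainder. Rearranging and scaling by $\sqrt{n}$ expresses $\sqrt{n}(\hat\theta - \theta_0)$ as $-[\nabla_\theta \hat{G}_\mathcal{N}]^{-1}$ applied to $\sqrt{n}\hat{G}_\mathcal{N}(\theta_0, \gamma_0) + \nabla_\gamma\hat{G}_\mathcal{N}\cdot\sqrt{n}(\hat\gamma - \gamma_0)$. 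By uniform laws of large numbers the Jacobians converge in probability to their population counterparts (proportional to $J_\theta$ and $J_\gamma$, the $\pi^*$ normalization cancelling in the final sandwich), whose explicit forms follow from $\nabla_\theta q(X;\theta) = q(1-q)X$ and $\nabla_\gamma \pi(Z;\gamma)^{-1} = -e^{-\gamma^T Z}Z$ together with the change of measure in Lemma \ref{app_lemma:change_measure} that rewrites the relevant expectations over the event $\{R=1\}$. The leading stochastic term $\sqrt{n}\hat{G}_\mathcal{N}(\theta_0, \gamma_0) = \sqrt{n}\,N^{-1}\sum_{i=1}^N R_i h(A_i, Z_i, X_i; \theta_0, \gamma_0)$ is a normalized sum of i.i.d. bounded terms with mean zero (since $G(\theta_0,\gamma_0)=0$), so the central limit theorem gives asymptotic normality with limiting covariance $\mbbE[h h^T \mid R=1]$ after accounting for the $n/N\rightarrow\pi^*$ scaling.

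For the first-step contribution I would write $\sqrt{n}(\hat\gamma - \gamma_0) = \sqrt{n/n^v}\,\sqrt{n^v}(\hat\gamma - \gamma_0)$ and invoke Proposition \ref{app_prop:wlog_asymptotics} together with $n/n^v\rightarrow\kappa$, which produces the $\kappa J_\gamma \Sigma^v J_\gamma^T$ term in $\Xi$. The crucial structural observation, and the reason the two variance contributions add without a cross-covariance term, is that $\hat\gamma$ is computed from the NCVS sample whereas the NIBRS score $\sqrt{n}\hat{G}_\mathcal{N}(\theta_0,\gamma_0)$ is computed from a separate, independent sample; the two asymptotically normal vectors are therefore independent and their contributions to the sandwich are additive. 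Combining the pieces via Slutsky's theorem delivers $\Sigma^{-1/2}\sqrt{n}(\hat\theta - \theta_0)\overset{d}{\rightarrow}\mathcal{N}(0, I_d)$ with $\Sigma = J_\theta^{-1}\Xi J_\theta^{-1}$, the claimed limit.

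The main obstacle I anticipate is controlling the remainder $r_N$, which involves second derivatives in both $\theta$ and $\gamma$ evaluated at intermediate points. I would bound these uniformly using \ref{ass_boundedcov} and the positivity lower bound on $\pi(Z;\gamma)$ (so that $\pi^{-1}$ and its $\gamma$-derivatives stay bounded over $\Gamma$), showing $r_N = O_p(\norm{\hat\theta - \theta_0}^2 + \norm{\hat\gamma - \gamma_0}^2) = o_p(n^{-1/2})$; this mirrors the treatment of the quadratic term in the expansion of $\hat N/N$ in the proof of Proposition \ref{app_prop:asymptotic_N}. A secondary bookkeeping point is keeping the $\pi^*$ and $\kappa$ normalizations consistent when translating between $\sqrt{N}$- and $\sqrt{n}$-scaling and between unconditional expectations and expectations conditioned on $\{R=1\}$, verifying in particular that the $\pi^*$ factors appearing in both Jacobians and in the score variance cancel to leave exactly $J_\theta^{-1}\Xi J_\theta^{-1}$.
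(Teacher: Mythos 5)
Your proposal is correct and takes essentially the same route as the paper: the paper's proof verifies the three conditions of \citet{yuan1998asymptotics} and applies their Theorem 4, which amounts precisely to the expansion you write down --- a CLT for the i.i.d.\ NIBRS score at $(\theta_0,\gamma_0)$, a $\gamma$-expansion producing $\sqrt{\kappa}\,J_\gamma\,\sqrt{n^v}(\hat\gamma-\gamma_0)$ via Proposition \ref{app_prop:wlog_asymptotics} and Slutsky, asymptotic independence of the two contributions because they come from separate samples (hence no cross-covariance), Jacobian convergence to $J_\theta$ with remainders killed by \ref{ass_correctpi}, \ref{ass_boundedcov}, and the positivity lower bound on $\pi$, and Lemma \ref{app_lemma:change_measure} to restate the limits conditionally on $\{R=1\}$ with the $\pi^*$ factors cancelling in the sandwich exactly as you note. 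The only differences are cosmetic packaging (Yuan--Jennrich versus an explicit Newey-style joint expansion) and one loose step on your end --- bounding $r_N$ by $O_p(\norm{\hat\theta-\theta_0}^2)$ implicitly presupposes the $\sqrt{n}$-rate being proved, a mild circularity routinely removed by using the mean-value form of the expansion in $\theta$, which is what the paper's uniform Jacobian convergence condition accomplishes.
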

\begin{proof}%
    To show that $\hat\theta$ is consistent for $\theta_0$ and asymptotically normal, we can verify the 
    following three Assumptions from \citet{yuan1998asymptotics}:
    \begin{enumerate}[label=\textbf{C.\arabic*},ref=C.\arabic*]
        \item \label{app_yuan_cond1} $\hat{G}_N(\theta_0,\hat\gamma)\overset{p}{\rightarrow}0$ as $N\rightarrow\infty$.
        \item \label{app_yuan_cond2} There exists a neighborhood $\Theta$ of
        $\theta_0$ on which with probability one all $\hat{G}_N(\theta,
        \hat\gamma)$ are continuously differentiable and $\nabla_\theta
        \hat{G}_N(\theta, \hat\gamma)$ converge uniformly to a nonstochastic
        limit that is nonsigular at $\theta_0$. 
        \item \label{app_yuan_cond3} $\Xi^{-1/2}\sqrt{n}\hat{G}_N(\theta_0,
        \hat\gamma)\overset{d} {\rightarrow}\mathcal{N}(0,I_{d_x})$ as $N\rightarrow\infty$ for
        some matrix $\Xi$. 
    \end{enumerate}
    Since \ref{app_yuan_cond3} implies \ref{app_yuan_cond1}, we 
    only need to prove \ref{app_yuan_cond2} and \ref{app_yuan_cond3}. 

    We first show that \ref{app_yuan_cond3} holds. Consider the following
   Taylor expansion of $\sqrt{n}\hat{G}_N(\theta_0, \hat\gamma)$
    \begin{multline*}
       \frac{\sqrt{n}}{N}\sum_{i=1}^N R_i h(A_i, Z_i, X_i;\theta_0,\gamma_0) 
        -   \frac{1}{N}\sum_{i=1}^N R_i q(X_i;\theta_0) e^{-\gamma_0^T Z_i} X_i  Z_i^T\frac{\sqrt n}{\sqrt{n^v}}\sqrt{n^v}(\hat\gamma-\gamma_0) \\
        +   \frac{1}{N}\sum_{i=1}^N R_i q(X_i;\theta_0) e^{-\tilde\gamma^T Z_i} X_i \frac{\sqrt n}{n^v} \left( \sqrt{n^v}(\hat\gamma-\gamma_0)^T Z_i \right)^2.
    \end{multline*}
    where $\tilde\gamma$ is a convex combination of $\hat\gamma$ and $\gamma_0$.
    The first term is a \iid sum whose terms have bounded moments by
    \ref{ass_boundedcov}. Thus, it is asymptotically normal by the central limit
    theorem. 
    The term $N^{-1}\sum_{i=1}^N R_i  q(X_i;\theta_0) e^{-\gamma_0^T Z_i} X_i
    Z_i^T $ is an \iid average formed by terms that have finite moments by
    \ref{ass_boundedcov}, so it converges in probability to $J_\gamma:=\mbbE[R
    q(X;\theta_0) e^{-\gamma_0^T Z} X Z^T]$ by the weak law of large numbers.
    Thus, the second term is asymptotically normal by Proposition
    \ref{app_prop:wlog_asymptotics} and Slutsky. Using similar arguments as in
    the proof of Proposition \ref{app_prop:asymptotic_N}, we can show that the
    third term is $o_p(1)$. The first two terms are asymptotically independent
    because they are arise from separate samples, hence it follows that
    \ref{app_yuan_cond3} is verified. 

    To show that \ref{app_yuan_cond2} holds, it suffices to show that for any
    random vector $\theta_N\in\Theta$ converging in probability to
    $\theta_0$, $\hat{\dot{G}}_N(\theta_N, \hat\gamma):=\nabla_\theta \hat{G}_N(\theta, \hat
    \gamma)\overset{p}{\rightarrow}J_\theta$ for some nonstochastic function
    $J_\theta:=J(\theta_0)$ (Theorem 1 in \citet{iseki1957theorem}). Consider the following Taylor expansion of
    $\hat{\dot{G}}_N(\theta_N, \hat\gamma)$
    \begin{equation}\label{app_eq:yuan_cond2exp}
         \frac{1}{N}\sum_{i=1}^N R_i  q(X_i;\theta_N)(q(X_i;\theta_N) - 1)(1 + e^{-Z_i^T\gamma_0}) X_i X_i^T  
        +  (\hat\gamma - \gamma_0)^T \nabla_\gamma \hat{\dot{G}}_N(\theta_N, \tilde\gamma)
    \end{equation}
    where $\tilde\gamma$ is a convex combination of $\hat\gamma$ and $\gamma_0$.
    The first term in \eqref{app_eq:yuan_cond2exp} can be
    rewritten as
    \begin{multline*}
        \frac{1}{N}\sum_{i=1}^N R_i  q(X_i;\theta_0)(q(X_i;\theta_0) - 1)(1 + e^{-Z_i^T\gamma_0}) X_i X_i^T \\
       +  \frac{1}{N}\sum_{i=1}^N R_i  \left[ q(X_i;\theta_N)(q(X_i;\theta_N) - 1) - q(X_i;\theta_0)(q(X_i;\theta_0) -1 )\right](1 + e^{-Z_i^T\gamma_0}) X_i X_i^T
    \end{multline*}
    where the first term converges to
    $J_\theta:=\mbbE[RXX^Te^{-X^T\theta_0}/\pi(Z;\gamma)]$ by the weak law of
    large numbers. The second term is $o_p(1)$ by Cauchy-Schwarz and
    \ref{ass_boundedcov}; the upper bound can be derived using a similar
    strategy as in expression \eqref{app_eq:rubin_cond2_decomposition_te} of the
    proof of Proposition \ref{app_prop:wlog_asymptotics}.  
    For the second term in \eqref{app_eq:yuan_cond2exp}, we have that
    \begin{multline}\label{app_eq:yuan_cond2exp_second}
        (\hat\gamma - \gamma_0)^T \nabla_\gamma \hat{\dot{G}}_N(\theta_N, \tilde\gamma)\\= (\hat\gamma - \gamma_0)^T \frac{1}{N}\sum_{i=1}^N R_i Z_i q(X_i;\theta_N)(1 - q(X_i;\theta_N))e^{-Z_i^T\tilde\gamma} X_i X_i^T 
    \end{multline}
    where each element of the $d_x \times d_x$ matrix can be upper bounded by
    $$\norm{\hat\gamma -
    \gamma_0}M^2e^{\sqrt{d}M\sup_{\gamma\in\Gamma}\norm{\gamma}}\sqrt{d_z}.$$
    Together with Proposition \ref{app_prop:wlog_asymptotics}, this implies each
    of the elements in \eqref{app_eq:yuan_cond2exp_second} is $o_p(1)$. 
   It follows that \ref{app_yuan_cond2} holds true.

    Under Assumptions \ref{app_yuan_cond1}, \ref{app_yuan_cond2}, and \ref{app_yuan_cond3},
   the result follows by an application of Theorem 4 in
   \citet{yuan1998asymptotics} and our Lemma \ref{app_lemma:change_measure}. 
\end{proof}

Finally, we turn to the result on generalized estimation equations (GEEs) given by Proposition \ref{prop:gee_convergence}. In the proof, we will use $\alpha_0$, which is such that for each $1\leq i\leq N$, $K\geq 2$, and $1\leq
k<j\leq K_i$, %
\begin{equation*}
    \alpha_0:=\frac{\mbbE[\bold{A}_{ik} \bold{A}_{ij}|\bold{X}_i] - \mbbE[\bold{A}_{ij}|\bold{X}_i]\mbbE[\bold{A}_{ik}|\bold{X}_i]}{\sqrt{\var{\bold{A}_{ik}|\bold{X}_i}}\sqrt{\var{\bold{A}_{ij}|\bold{X}_i}}}
\end{equation*}
where $\mbbE[\bold{A}_{ij}|\bold{X}_i]=q(\bold{X}_i;\theta_0)$ and
$\var{\bold{A}_{ik}|\bold{X}_i}=q(\bold{X}_i;\theta_0)(1-q(\bold{X}_i;\theta_0))$
by \ref{ass:gee_mean_correct}.

\begin{proposition}\label{prop:gee_convergence} Assume that the 
    conditions of Proposition \ref{app_prop:wlog_asymptotics} and
    \ref{ass_correctpi}--\ref{ass:gee_mean_correct} hold. 
    Assume that the entries of $W(\bold{X}, \theta, \alpha)^{-1}$ and their
    derivatives are continuous.
    Let $\hat\theta$ be the estimate of $\theta$ obtained by solving the estimating equation 
    \begin{equation*}
        \hat{G}_{N}(\theta, \hat\alpha,\hat\gamma) := \frac{1}{N}\sum_{i=1}^{N} R_i \bold{X}_i D_i(\theta) W_i(\theta, \hat\alpha)^{-1} 
    \left(\bold{A}_i - \frac{\bold{q}_i(\theta)}{\pi(Z_i;\hat \gamma)} \right) = 0.
    \end{equation*}
    Let $\hat\alpha$ be an estimator of $\alpha_0$ such that
    $\hat\alpha-\alpha_0 = O_p(1/\sqrt{N})$. Then 
    \begin{equation*}
       \Sigma^{-1/2}\sqrt{n}(\hat\theta - \theta_0)\overset{d} {\rightarrow}\mathcal{N}(0,I_{d_x})
    \end{equation*}
    as $N\rightarrow\infty$. $\Sigma:=J_\theta^{-1}(\Xi)J_\theta^{-1}$ with 
    \begin{gather*}
        \Xi:=\mbbE[h(\bold{X}, Z, \bold{A};\theta_0, \alpha_0)h(\bold{X}, Z, \bold{A};\theta_0, \alpha_0)^T|R=1] + \kappa  J_\gamma \Sigma^v J^T_\gamma
        \\
        J_\theta := \mbbE\left[ \bold{X} D(\bold{X};\theta_0)W(\bold{X};\theta_0, \alpha_0)^{-1}\nabla_\theta \bold{q}(\theta_0) \pi(Z;\gamma_0)^{-1}|R=1\right]\\
        J_\gamma := \mbbE\left[ \bold{X} D(\bold{X};\theta_0)W(\bold{X};\theta_0,
         \alpha_0)^{-1}\bold{q}(\theta_0) Z e^{-Z^T\gamma_0}|R=1\right]
    \end{gather*}
    where $h(\bold{X}, Z, \bold{A};\theta_0, \alpha_0):=\bold{X}
    D(\bold{X};\theta_0)W(\bold{X};\theta_0, \alpha_0)^{-1}(\bold{A}-
    \bold{q}(\theta_0)\pi(Z;\gamma_0)^{-1})$,
    $\bold{q}(\theta_0):=(q(X_1;\theta_0), \dots, q(X_K;\theta_0))^T$ with $X_k$
    for $1\leq k\leq K$ being the $k^{th}$ column of $\bold{X}$. 
 \end{proposition}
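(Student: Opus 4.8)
The plan is to reduce the proposition to the two-step M-estimation theory of \citet{yuan1998asymptotics}, exactly as in the proof of Proposition \ref{app_prop:q_asymptotics_logreg}, but now carrying along two nuisance estimates ($\hat\gamma$ and $\hat\alpha$) and working at the level of incidents (clusters) so that the \iid structure of $O^N$ is preserved. First I would record a cluster-valued analogue of Lemma \ref{app_lemma:change_measure}: since $R_i=0$ implies $\mathbf{A}_i=0$ and, under \ref{ass_correctpi}--\ref{ass_matchingpi}, $\mbbP(R=1\mid \mathbf{X},Z)=\pi(Z;\gamma_0)$, for any matrix-valued $f(\mathbf{X},Z)$ one has $\mbbE[f(\mathbf{X},Z)]=\mbbE[f(\mathbf{X},Z)/\pi(Z;\gamma_0)\mid R=1]\,\pi^*$. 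This converts the unconditional moments of the inflated GEE score into conditional moments given $\{R=1\}$, the only regime we observe, and is what produces the $\pi^*$ factors and the $\mid R=1$ conditioning in $\Xi$, $J_\theta$, and $J_\gamma$.

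The core computation is a joint Taylor expansion of $\sqrt{n}\,\hat{G}_N(\theta_0,\hat\alpha,\hat\gamma)$ about $(\alpha_0,\gamma_0)$, yielding a leading \iid term together with perturbation terms in $\sqrt{N}(\hat\alpha-\alpha_0)$ and $\sqrt{n^v}(\hat\gamma-\gamma_0)$ plus a quadratic remainder. The leading term $\sqrt{n}\,\hat G_N(\theta_0,\alpha_0,\gamma_0)$ is an average of independent, bounded (by \ref{ass_boundedcov}) cluster-level summands $R_i h(\mathbf{X}_i,Z_i,\mathbf{A}_i;\theta_0,\alpha_0)$, so the central limit theorem applies and gives the first piece of $\Xi$. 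The $\hat\gamma$ term is handled precisely as in Proposition \ref{app_prop:q_asymptotics_logreg}: only $\pi(Z_i;\gamma)$ depends on $\gamma$, the associated coefficient average converges by the weak law to $J_\gamma$, Proposition \ref{app_prop:wlog_asymptotics} supplies the asymptotic normality of $\sqrt{n^v}(\hat\gamma-\gamma_0)$, asymptotic independence from the leading term holds because $\hat\gamma$ is computed on the separate NCVS sample (contributing the $\kappa\,J_\gamma\Sigma^v J_\gamma^T$ piece), and the quadratic remainder is $o_p(1)$ by the same Cauchy--Schwarz bound used there.

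The decisive step, and the one distinguishing the GEE case from the single-offender case, is showing that estimation of the working correlation has no first-order effect, i.e. that the $\sqrt{N}(\hat\alpha-\alpha_0)$ term vanishes. Since $\hat\alpha-\alpha_0=O_p(1/\sqrt{N})$, it suffices to prove $\nabla_\alpha \hat G_N(\theta_0,\alpha_0,\gamma_0)\overset{p}{\rightarrow}0$. Only $W_i(\theta,\alpha)^{-1}$ depends on $\alpha$, so the $\alpha$-derivative of each summand is $R_i\mathbf{X}_iD_i(\theta_0)\,[\nabla_\alpha W_i(\theta_0,\alpha_0)^{-1}]\,(\mathbf{A}_i-\mathbf{q}_i(\theta_0)/\pi(Z_i;\gamma_0))$; because $\nabla_\alpha W_i(\theta_0,\alpha_0)^{-1}$ is a function of the covariates only and, by \ref{ass:gee_mean_correct} together with the change-of-measure lemma, $\mbbE[R_i(\mathbf{A}_i-\mathbf{q}_i(\theta_0)/\pi(Z_i;\gamma_0))\mid \mathbf{X}_i,Z_i]=0$, the probability limit of this average is zero. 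This is the usual robustness of GEE point estimates to the working correlation; I expect it to be the main obstacle, since it requires the mean model \ref{ass:gee_mean_correct} and the positivity implied by \ref{ass_correctpi}--\ref{ass_boundedcov} to interact correctly with the change of measure.

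Finally I would verify condition (C.2) of \citet{yuan1998asymptotics}, namely that $\nabla_\theta\hat G_N(\theta,\hat\alpha)$ converges uniformly in a neighborhood of $\theta_0$ to the nonsingular nonstochastic limit $J_\theta$, by a Taylor expansion in $\theta$ and $\gamma$ identical in structure to the one in Proposition \ref{app_prop:q_asymptotics_logreg}, with the assumed continuity (hence local boundedness) of the entries of $W(\mathbf{X},\theta,\alpha)^{-1}$ and their derivatives controlling the cluster-level terms; condition (C.1) then follows from (C.3) and (C.1)--(C.3) deliver the conclusion via Theorem 4 of \citet{yuan1998asymptotics} and Lemma \ref{app_lemma:change_measure}. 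One point I would flag is the need for a moment bound on the cluster size $K$ (for instance finiteness of $\mbbE[K^2]$) to ensure the cluster-level summands have the finite second moments required by the CLT and the weak law, since \ref{ass_boundedcov} bounds the per-offender covariates but not the number of offenders per incident.
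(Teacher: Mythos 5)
Your proposal takes essentially the same route as the paper's proof: it verifies conditions (C.1)--(C.3) of \citet{yuan1998asymptotics} via a joint Taylor expansion of $\sqrt{n}\hat{G}_N(\theta_0,\hat\alpha,\hat\gamma)$ in $(\alpha,\gamma)$, with the central limit theorem applied to the \iid cluster-level scores, the $\hat\alpha$-perturbation eliminated because $\nabla_\alpha \hat{G}_N(\theta_0,\alpha_0,\gamma_0)\overset{p}{\rightarrow}0$ under \ref{ass:gee_mean_correct} (the usual GEE robustness to the working correlation, which is exactly how the paper handles it), the $\hat\gamma$-perturbation contributing $\kappa J_\gamma \Sigma^v J_\gamma^T$ through Proposition \ref{app_prop:wlog_asymptotics}, Slutsky, and cross-sample independence, and Lemma \ref{app_lemma:change_measure} converting unconditional moments into the $\{R=1\}$-conditional forms appearing in $\Xi$, $J_\theta$, and $J_\gamma$. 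Your closing flag about needing a moment condition on the cluster size $K$ (e.g., $\mbbE[K^2]<\infty$) is a fair observation: the paper passes over this by implicitly treating the cluster-level summands as bounded, even though \ref{ass_boundedcov} only bounds per-offender covariates.
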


\begin{proof}

    To prove consistency of $\hat\theta$ for $\theta_0$ and its asymptotic
    normality, we will use the results of \citet{yuan1998asymptotics} which rely on
    the following three conditions.
    \begin{enumerate}[label=\textbf{C.\arabic*},ref=C.\arabic*]
        \item \label{cond:yuan_cond1_gee} $ \hat{G}_N(\theta_0,
        \hat\alpha,\hat\gamma)\overset{p}{\rightarrow}0$ as
        $N\rightarrow\infty$.
        \item \label{cond:yuan_cond2_gee} There exists a neighborhood $\Theta$
        of $\theta_0$ on which with probability one $\nabla_\theta \hat{G}_{N}(\theta,
        \hat\alpha,\hat\gamma)$ is continuously differentiable and its
        derivatives converge uniformly to a nonstochastic limit that is
        nonsigular at $\theta_0$. 
        \item \label{cond:yuan_cond3_gee}
        $\Xi^{-1/2}\sqrt{n}\hat{G}_N(\theta_0,\hat\alpha,\hat\gamma)\overset{d} {\rightarrow}\mathcal{N}(0,I_{d_x})$
        for some positive definite matrix $\Xi$.
    \end{enumerate}

    To show that \ref{cond:yuan_cond2_gee} holds, it suffices to show that for
    any random vector $\theta_N\in\Theta$ converging in probability to
    $\theta_0$, $\hat{\dot{G}}_N(\theta, \hat\alpha, \hat\gamma):= \hat{\dot{G}}_N(\theta_N, \hat\alpha,
    \hat\gamma)\overset{p}{\rightarrow}J(\theta_0)$ as $N\rightarrow\infty$ for
    some nonstochastic limit $J_\theta:=J(\theta_0)$. First note that $\nabla_\theta
    \hat{G}_N(\theta, \alpha, \gamma)$ represents the mean of $N$ \iid
    observations. For $i=1,\dots,N$, the $i^{th}$ observation is finite by
    \ref{ass_boundedcov} and the fact that $\max_{i,j=1}^{N_i}|W_i(\theta,
    \alpha)^{-1}|$ is bounded. To simplify the presentation, let us rewrite
     $\hat{\dot{G}}_N(\theta, \alpha, \gamma)=\hat{\dot{G}}^D_N(\theta, \alpha, \gamma) + \hat{\dot{G}}^W_N(\theta, \alpha, \gamma) + \hat{\dot{G}}^q_N(\theta, \alpha, \gamma)$ where
        \begin{gather*}
            \hat{\dot{G}}^D_N(\theta, \alpha, \gamma) := \frac{1}{N}\sum_{i=1}^N \bold{X}_i [\nabla_\theta D_i(\theta)] W_i(\theta, \alpha)^{-1} \left(\bold{A}_i - \bold{q}_i(\theta)\frac{1}{\pi(Z_i;\gamma)}\right)\label{eq:gee_derd} \\
            \hat{\dot{G}}^W_N(\theta, \alpha, \gamma) := \frac{1}{N}\sum_{i=1}^N \bold{X}_i D_i(\theta) [\nabla_\theta W_i(\theta, \alpha)^{-1}] \left(\bold{A}_i - \bold{q}_i(\theta)\frac{1}{\pi(Z_i;\gamma)}\right)\label{eq:gee_derw} \\
            \hat{\dot{G}}^q_N(\theta, \alpha, \gamma) := - \frac{1}{N}\sum_{i=1}^N R_i\bold{X}_i D_i(\theta) W_i(\theta, \alpha)^{-1} \frac{1}{\pi(Z_i;\gamma)}\left[\nabla_\theta \bold{q}_i(\theta)\right]\label{eq:gee_dera} 
        \end{gather*}
    Consider the following Taylor expansion of $\hat{\dot{G}}_N(\theta_N,
    \hat\alpha, \hat\gamma)$:
    \begin{multline}\label{eq:gee_cond2_te}
         \hat{\dot{G}}_N(\theta_N, \alpha_0, \gamma_0)
        + (\hat\alpha - \alpha_0)\nabla_\alpha \hat{\dot{G}}_N(\theta_N, \tilde\alpha, \gamma_0) \\
        + (\hat\gamma - \gamma_0)^T \nabla_\gamma \hat{\dot{G}}_N(\theta_N, \alpha_0, \tilde\gamma) 
        + (\hat\gamma - \gamma_0)^T \nabla_\alpha \nabla_\gamma \hat{\dot{G}}_N(\theta_N, \tilde\alpha, \tilde\gamma) (\hat\alpha - \alpha_0)
    \end{multline}
    where $\tilde\gamma$ is a convex combination of $\hat\gamma$ and $\gamma_0$,
    while $\tilde\alpha$ is a convex combination of $\hat\alpha$ and $\alpha_0$.

    The first term in \eqref{eq:gee_cond2_te} can be rewritten as 
    \begin{equation}\label{eq:gee_cond2_te_firstterm}
        \hat{\dot{G}}_N(\theta_0, \alpha_0, \gamma_0) 
        +  \hat{\dot{G}}_N(\theta_N, \alpha_0, \gamma_0)  -  \hat{\dot{G}}_N(\theta_0, \alpha_0, \gamma_0). 
    \end{equation}

    The first term in \eqref{eq:gee_cond2_te_firstterm} is an average of $N$
    terms that are \iid and finite by \ref{ass_correctpi},
    \ref{ass_boundedcov}, and the boundedness of $W_i(\theta, \alpha)^{-1}$.
    Since the model for the mean is corectly specified by \ref{ass:gee_mean_correct}, the first term converges in 
    probability $-\mbbE[R D(\bold{X};\theta_0)W(\bold{X};\theta_0, \alpha_0)^{-1}\nabla_\theta
    \bold{q}(\theta_0)\pi(Z;\gamma_0)^{-1}]$ by the weak law of large 
    numbers and iterated expectations. 

    We can then show that the difference between the remaining two terms in
    \eqref{eq:gee_cond2_te_firstterm} converges to $0$ in probability. Let
    $\bar{w}:=\max_{\theta\in\theta, \alpha\in[-1,1]}\max_{i,j,k} |(W_i(\theta,
    \alpha)^{-1})_{jk}|$. We have that $\hat{\dot{G}}^D_N(\theta_N, \alpha_0,
    \gamma_0)$ is equal to
    \begin{multline*}
     \frac{1}{N}\sum_{i=1}^N\bold{X}_i \nabla_\theta D_i(\theta_N)
    W_i(\theta_N, \alpha_0)^{-1}\left(A_i - \frac{q_i(\theta_N)}{\pi(Z_i;\gamma_0)}\right)
    \\ < \frac{1}{N} \sum_{i=1}^N M^2\bar{w} u_{d_x}u_{K_i}^T I_{K_i} u_{K_i}u_{K_i}^T \left[\left(A_i -  \frac{q_i(\theta_0)}{\pi(Z_i;\gamma_0)}\right) + 
    \frac{1}{\pi(Z_i;\gamma_0)}\left(q_i(\theta_0) - q_i(\theta_N)\right)\right] u_{d_x}^T
    \end{multline*}
    where $u_n:=(1, \dots, 1)^T$ has length $n\in\mathbb{Z}^{+}$. The inequality
    follows from \ref{ass_boundedcov} and the boundedness of $W(\theta_N,
    \alpha)^{-1}$. Then $N^{-1}\sum_{i=1}^N (A_i -
    q_i(\theta_0)/\pi(Z_i;\theta_0))$ is an average of \iid terms that are
    finite and thus converges in proability to $0$ by
    the weak law of large numbers and \ref{ass:gee_mean_correct}. By
    \ref{ass_correctpi}, \ref{ass_boundedcov}, and Cauchy-Schwarz, we have the following upper bound
    \begin{equation*}
        \frac{1}{N}\sum_{i=1}^N \frac{1}{\pi(Z_i;\gamma_0)} \left( q_i(\theta_0 - q_i(\theta_N) )  \right) < \frac{1}{\epsilon} e^{\norm{\theta_0}M\sqrt{d}}\left( e^{\norm{\theta_N - \theta_0}M\sqrt{d}} -1   \right) 
    \end{equation*}
    where the RHS converges to $0$ in probability by the weak law of large
    numbers and the continuous mapping theorem. It follows that
    $\hat{\dot{G}}^D_N(\theta_N, \alpha_0, \gamma_0)$ converges to $0$ in
    probability. Using similar arguments, we can show that
    $\hat{\dot{G}}^W_N(\theta_N, \alpha_0, \gamma_0)$ converges in probability
    to $0$. Clearly, $\hat{\dot{G}}^D_N(\theta_0, \alpha_0, \gamma_0)$ and
    $\hat{\dot{G}}^W_N(\theta_0, \alpha_0, \gamma_0)$ converge to $0$ in
    probability by the weak law of large numbers. Next, we need to show that $
    \hat{\dot{G}}^q_N(\theta_N, \alpha, \gamma)- \hat{\dot{G}}^q_N(\theta_0,
    \alpha, \gamma)$ converges to $0$ in probability. This difference can be rewritten as 
    \begin{multline*}
        \frac{1}{N}\sum_{i=1}^N \frac{R_i}{\pi(Z_i;\gamma_0)}\bold{X}_i \left[ D_i(\theta_0)W_i(\theta_0, \alpha_0)^{-1}\nabla \bold{q}(\theta_0) - D_i(\theta_N)W_(\theta_N, \alpha_0)^{-1}\nabla \bold{q}(\theta_N)\right].
        \end{multline*}
    By \ref{ass_boundedcov}, the convergence boils down
    to showing that for all $i=1,\dots, N$ and $1\leq k,j\leq K_i$, 
    \begin{multline*}
       (W_i(\theta_0, \alpha_0)^{-1})_{jk} q(X_{ik};\theta_0)(1 - q(X_{ik};\theta_0)) q(X_{ij};\theta_0)(1 - q(X_{ij};\theta_0))\\
        -  (W_i(\theta_N, \alpha_0)^{-1})_{jk} q(X_{ik};\theta_N)(1 - q(X_{ik};\theta_N)) q(X_{ij};\theta_N)(1 - q(X_{ij};\theta_N))
    \end{multline*}
    converges in probability to $0$. This can be upper bounded by 
    \begin{multline}\label{eq:gee_convergence_te_secondterm_diff3}
       e^{\theta_0^T(X_{ik} + X_{ij})}(1 + e^{\theta_0^T X_{ik}})(1 + e^{\theta_0^T X_{ij}}) (W_i(\theta_0, \alpha_0)^{-1})_{kj}\\
        \bigg\vert \frac{e^{(\theta_N^T(X_{ik} + X_{ij})}}{e^{(\theta_0^T(X_{ik} + X_{ij})}}
         \frac{1+e^{\theta_N^T X_{ik}}}{1+e^{\theta_0^T X_{ik}}} \frac{1+e^{\theta_N^T X_{ij}}}{1+e^{\theta_0^T X_{ij}}} \frac{(W_i(\theta_N, \alpha_0)^{-1})_{kj}}{(W_i(\theta_0, \alpha_0)^{-1})_{kj}}-1 \bigg\vert
         \\
         <  e^{2\norm{\theta_0}\sqrt{d}M}(1 + e^{\norm{\theta_0}\sqrt{d}M})^2
         \left\vert e^{4\norm{\theta_N - \theta_0}\sqrt{d}M} (1 + o_p(1)) - 1  \right\vert
    \end{multline}
    where the inequality follows from Cauchy-Schwarz, the fact that
    $\norm{X}\leq \sqrt{d}\norm{X}_\infty < M$, and the continuous mapping
    theorem thanks to the fact that $W_i$ has continuous derivatives and
    $\theta_N\rightarrow\theta_0$ as $N\rightarrow\infty$. Since
    $\theta_N\overset{p}{\rightarrow}\theta_0$ as $N\rightarrow\infty$, the RHS
    in \eqref{eq:gee_convergence_te_secondterm_diff3} is $o_p(1)$. Thus, we can
    conclude that the difference between the second and third terms in
    \eqref{eq:gee_cond2_te_firstterm} converges in probability to $0$. 
    
    Next, we turn again to the Taylor expansion of $\hat{\dot{G}}_N(\theta_N,
    \hat\alpha, \hat\gamma)$ in \eqref{eq:gee_cond2_te}. Note that
    $\nabla_\alpha \hat{\dot{G}}_N (\theta_N, \tilde\alpha, \gamma_0)$ is
    bounded and $\hat\alpha - \alpha_0 = o_p(1)$, so their product converges to
    $0$ in probability by Slutsky. 
    The third term $\nabla_\gamma \hat{\dot{G}}_N(\theta_N, \alpha_0,
    \tilde\gamma)$ is also bounded and, by Proposition
    \ref{app_prop:wlog_asymptotics}, $\hat\gamma - \gamma_0
    \overset{p}{\rightarrow}0$ as $N\rightarrow\infty$, thus we can appply
    Cauchy-Schwarz to show that the product converges to $0$ in probability. The
    fourth term in the RHS of \eqref{eq:gee_cond2_te} can be shown to converge
    in probability to $0$ using analogous arguments. It follows that
    \ref{cond:yuan_cond2_gee} holds. 

       Since \ref{cond:yuan_cond3_gee} implies \ref{cond:yuan_cond1_gee}, we
    only need to show that \ref{cond:yuan_cond3_gee} holds.  To show that \ref{cond:yuan_cond3_gee} holds, consider the following
    Taylor expansion of $\sqrt{n}\hat{G}(\theta_0, \hat\alpha, \hat\gamma)$:
    \begin{multline}\label{eq:gee_cond3_taylor}
        \sqrt{n}\bigg[ \hat{G}_N(\theta_0, \alpha_0, \gamma_0) 
        + (\hat\alpha - \alpha_0)\nabla_\alpha \hat{G}_N(\theta_0, \alpha_0, \gamma_0)
        + (\hat\gamma - \gamma_0)^T\nabla_\gamma \hat{G}_N(\theta_0, \alpha_0, \gamma_0)\\
        + (\hat\alpha - \alpha_0)^2\nabla^2_\alpha \hat{G}_N(\theta_0, \tilde\alpha, \gamma_0)
        + (\hat\alpha - \alpha_0)(\hat\gamma - \gamma_0)^T\\\nabla_\alpha \nabla_\gamma \hat{G}_N(\theta_0, \alpha_0, \gamma_0)
        + (\hat\gamma - \gamma_0)^T \nabla^2_\gamma \hat{G}_N(\theta_0, \alpha, \tilde\gamma) (\hat\gamma - \gamma_0) \\
        + (\hat\alpha - \alpha_0)^2(\hat\gamma - \gamma_0)^T\nabla^2_\alpha \nabla_\gamma \hat{G}_N(\theta_0, \tilde\alpha, \gamma_0)
        + (\hat\alpha - \alpha_0)(\hat\gamma - \gamma_0)\nabla_\alpha \nabla^2_\gamma \hat{G}_N(\theta_0, \alpha_0, \tilde\gamma)(\hat\gamma - \gamma_0)\\
        + (\hat\alpha - \alpha_0)^2 (\hat\gamma - \gamma_0)^T \nabla^2_\alpha \nabla^2_\gamma \hat{G}_N(\theta_0, \alpha, \tilde\gamma) (\hat\gamma - \gamma_0)
        \bigg].
    \end{multline}

    The first term in \eqref{eq:gee_cond3_taylor} is composed by $N$ \iid bounded random variables and thus,
    by the central limit theorem,
    \begin{equation*}
        \sqrt{n}\hat{G}_N(\theta_0, \alpha_0, \gamma_0) \overset{d}{\rightarrow} \mathcal{N}(0,\Xi)
    \end{equation*}
    as $N\rightarrow\infty$ where $\Xi:= \text{Var}(G(\theta_0, \alpha_0, \gamma_0)$. 
    
    For the second term in \eqref{eq:gee_cond3_taylor}, $\sqrt{n}(\hat\alpha -
    \alpha_0)=O_p(1)$ by Assumption, while $\nabla_\alpha \hat{G}_N(\theta_0,
    \alpha_0, \gamma_0)$ converges in probability to $0$ by the weak law of
    large numbers and \ref{ass:gee_mean_correct}. By the continuous mapping
    theorem, their product then converges in probability to $0$. 
    
   For the third term,
    \begin{equation*}
       \sqrt{n}(\hat\gamma - \gamma_0)^T\nabla_\gamma \hat{G}_N(\theta_0, \alpha_0, \gamma_0) =  \sqrt{n^v}(\hat\gamma - \gamma_0)^T\sqrt{\kappa} \nabla_\gamma \hat{G}_N(\theta_0, \alpha_0, \gamma_0)
    \end{equation*}
    where $\nabla_\gamma \hat{G}_N(\theta_0, \alpha_0,
    \gamma_0)\overset{p}{\rightarrow}J_\gamma:=\mbbE[\nabla_\gamma G(\theta_0,
    \alpha_0, \gamma_0)]$ as $n^v\rightarrow\infty$ by the weak law of large
    numbers. In addition, $\sqrt{n^v}(\hat\gamma -
    \gamma_0)\overset{d} {\rightarrow}\mathcal{N}(0,\Sigma^v)$ by propositon
    \ref{app_prop:wlog_asymptotics}. Thus, the term converges in distribution by
    Slutsky to $\mathcal{N}(0,J_\gamma\Sigma^vJ_\gamma)$. 

    For the fourth term, 
    $\nabla^2_\alpha \hat{G}_N(\theta_0, \tilde\alpha, \gamma_0)$ is bounded and
    $\sqrt{n}(\hat\alpha - \alpha_0)=O_p(1)$, so their product converges in
    probability to $0$ as $N\rightarrow\infty$. Using similar arguments, it is
    easy to see that all remaining terms converge to $0$ in probability as well.
    Thus \ref{cond:yuan_cond3_gee} is satisied. 
    
    The result of the Proposition follows from Theorem 4 of
    \citet{yuan1998asymptotics}. 

\end{proof}

\end{appendix}


\bibliographystyle{imsart-nameyear}
\bibliography{references}

\end{document}